\newcommand{\Cite}[1]{\cite{#1}}
\renewenvironment{abstract}[1]%
  {\begin{quotation}\begin{small}}%
  {\end{small}\end{quotation}}
\newenvironment{acknowledgment}[1]%
  {\paragraph{Acknowledgments}}{}
\newcommand{\REVISED}[1]{#1}
\newcommand{\comment}[1]{}
\def\imply{\Rightarrow}
\def\gsim{\gtrsim}
\def\lsim{\lesssim}
\def\da{{\mathcal{\downarrow}}}
\def\B{{\cal B}}
\def\C{{\cal C}}
\def\D{{\cal D}}
\def\S{{\cal S}}
\def\T{{\cal T}}
\def\U{{\cal U}}
\def\V{{\cal V}}
\def\args{\mathit{args}}
\def\safe{\mathit{safe}}
\def\type{\mathit{type}}
\def\sub{\mathit{sub}}
\def\least{\mathrm{least}}
\newcommand{\sym}[1]{\mathrm{#1}}
\newcommand{\ol}[1]{\overline{#1}}
\newcommand{\ul}[1]{\underline{#1}}
\newcommand{\pair}[1]{\langle\,#1\,\rangle}
\newcommand{\Reduction}[9]{\mathrel{%
  {}_{#1}^{#2}%
  \raise#7\hbox{%
    \vtop{\ialign{##\crcr%
      \hfil\raise#8\hbox{$\scriptstyle{\ #4\ }$}\hfil\crcr%
      \noalign{\nointerlineskip}%
      #9\crcr%
      \noalign{\nointerlineskip}%
      $\hfil\scriptscriptstyle{\ #3\ }\hfil$\crcr%
    }}%
  }%
  {}_{\scriptscriptstyle #5}^{#6}}}%
\newcommand{\red}[2]{%
  \Reduction{}{}{#1}{#2}{}{}{0.8ex}{0.2ex}{\rightarrowfill}}
\newtheorem{definition}{Definition}[section]
\newtheorem{lemma}[definition]{Lemma}
\newtheorem{theorem}[definition]{Theorem}
\newtheorem{proposition}[definition]{Proposition}
\newtheorem{example}[definition]{Example}
\newtheorem{proof}{Proof.}
\def\qed{\hspace*{\fill}$\square$}
\begin{document}


%
%

\begin{center}
{\Large \bf
  Argument Filterings and Usable Rules\\[1mm] in Higher-Order Rewrite Systems}\\
\vspace*{20pt}
{\large
  SUZUKI Sho${}^\dag$,
  KUSAKARI Keiichirou${}^\dag$,\\[1mm]
  Fr\'{e}d\'{e}ric BLANQUI${}^\ddag$}\\
\vspace*{10pt}
  $\dag$ Graduate School of Information Science, Nagoya University\\\vspace*{1mm}
  $\ddag$ INRIA, France
\end{center}

\begin{abstract}\ 
The static dependency pair method is a method for proving the termination
of higher-order rewrite systems {\em \`a la} Nipkow.
It combines the dependency pair method introduced
for first-order rewrite systems
with the notion of strong computability
introduced for typed $\lambda$-calculi.
Argument filterings and usable rules are two important methods
of the dependency pair framework
used by current state-of-the-art first-order automated termination provers.
In this presentation,
we extend the class of higher-order systems
on which the static dependency pair method can be applied.
Then, we extend argument filterings and usable rules
to higher-order rewriting,
hence providing the basis for a powerful automated termination prover
for higher-order rewrite systems.
\end{abstract}


\section{Introduction}

Various extensions of term rewriting systems (TRSs) \cite{T03}
for handling functional variables and abstractions have been proposed
\cite{klop80phd,N91,jouannaud91lics,oostrom94phd,K01}.
In this paper,
we consider higher-order rewrite systems (HRSs) \cite{N91}, that is,
rewriting on $\beta$-normal $\eta$-long simply-typed $\lambda$-terms
using higher-order matching.

For example, the typical higher-order function
{\em foldl} can be defined by the following HRS:
\REVISED{%
\[R_{\sym{foldl}} = \left\{\begin{array}{rcl}
  \sym{foldl}(\lambda xy.F(x,y), X, \sym{nil}) &\to& X \\
  \sym{foldl}(\lambda xy.F(x,y), X, \sym{cons}(Y,L))
    &\to& \sym{foldl}(\lambda xy.F(x,y), F(X,Y), L)
\end{array}\right.\]
Here we suppose that the function {\em foldl} has the type
$(\mathbb N \to \mathbb N \to \mathbb N) \to L \to \mathbb N$,
and $L$ is a type of natural number's list.
}
Then, the functions {\em sum} and {\em len},
computing the sum of the elements and the number of elements respectively,
can be defined by the following HRSs:
\begin{eqnarray*}
  R_{\sym{sum}}
    &=& R_{\sym{foldl}} \cup \left\{\begin{array}{rcl}
          \sym{add}(0,Y)          &\to& Y \\
          \sym{add}(\sym{s}(X),Y) &\to& \sym{s}(\sym{add}(X,Y)) \\
          \sym{sum}(L)            &\to& \sym{foldl}(\lambda xy.\sym{add}(x,y),0,L)
        \end{array}\right. \\
  R_{\sym{len}}
    &=& R_{\sym{foldl}} \cup \left\{\begin{array}{rcl}
          \sym{len}(L) &\to& \sym{foldl}(\lambda xy.\sym{s}(x),0,L)
        \end{array}\right.
\end{eqnarray*}
In the HRS $R_{\sym{len}}$,
the anonymous function $\lambda xy.\sym{s}(x)$
is represented by using $\lambda$-abstraction.

The static dependency pair method is a method
for proving the termination of higher-order rewrite systems.
It combines the dependency pair method
introduced for first-order rewrite systems \cite{AG00}
with Tait and Girard's notion of strong computability
introduced for typed $\lambda$-calculi \cite{girard88book}.
It was first introduced
for simply-typed term rewriting systems (STRSs) \cite{KS07}
and then extended to HRSs \cite{KISB09}.
The static dependency pair method consists
in showing the non-loopingness of each static recursion component independently,
the set of static recursion components being computed
through some static analysis of the possible sequences of function calls.

This method applies only to plain function-passing (PFP) systems.
In this paper,
we provide a new definition of PFP
that significantly enlarges the class of systems
on which the method can be applied.
It is based on the notion of accessibility
introduced in \Cite{blanqui02tcs} and extended to HRSs in \Cite{B00}.

For the HRS $R_{\sym{sum}} \cup R_{\sym{len}}$,
the static dependency pair method returns the following two components:
\[\begin{array}{l}
    \left\{\begin{array}{l}
      \sym{foldl}^\sharp(\lambda xy.F(x,y), X, \sym{cons}(Y,L))
        \to \sym{foldl}^\sharp(\lambda xy.F(x,y), F(X,Y), L)
    \end{array}\right\} \\
    \left\{\begin{array}{l}
      \sym{add}^\sharp(\sym{s}(X),Y) \to \sym{add}^\sharp(X,Y)
    \end{array}\right\}
\end{array}\]
The static dependency pair method proves
the termination of the HRS $R_{\sym{sum}} \cup R_{\sym{len}}$
by showing the non-loopingness of each component.

In order to show the non-loopingness of a component,
the notion of reduction pair is often used.
Roughly speaking, it consists in finding a well-founded quasi-ordering
in which the component rules are strictly decreasing
and all the original rules are non-increasing.

Argument filterings, which consist in removing
some arguments of some functions, provide a way
to generate reduction pairs. First introduced for TRSs \cite{AG00},
it has been extended to STRSs \cite{K01,KS09}.
In this paper, we extend it to HRSs.

In order to reduce the number of constraints
required for showing the non-loopingness of a component,
the notion of usable rules is also very important.
Indeed, a finer analysis of sequences of function calls show that
not all original rules need to be taken into account when trying to prove
the termination of a component.
This analysis was first conducted for TRSs \cite{GTSF06,HM07}
and has been extended to STRSs \cite{SKSSN07,KS09}.
In this paper, we extend it to HRSs.

All together, this paper provides a strong theoretical basis for the
development of an automated termination prover for HRSs, by extending
to HRSs some successful techniques used by modern state-of-the-art
first-order termination provers like for instance
\cite{giesl06ijcar,HM07}.

The remainder of this paper is organized as follows.
Section 2 introduces HRSs.
Section 3 presents the static dependency pair method
and extend the class of systems on which it can be applied.
In Section 4,
we extend the argument filtering method to HRSs.
In Section 5,
we extend the notion of usable rules on HRSs.
Concluding remarks are given in Section 6.

\section{Preliminaries}

In this section,
we introduce the basic notions for HRSs
according to \Cite{N91,MN98}.

The set $\S$ of {\em simple types} is generated
from the set $\B$ of {\em basic types} by the type constructor $\to$.
A {\em functional} or {\em higher-order type}
is a simple type of the form $\alpha \to \beta$.
We denote by $\rhd_s$ the strict subterm relation on types.

A {\em preterm} is generated from an infinite set of typed variables $\V$
and a set of typed function symbols $\Sigma$ disjoint from $\V$
by $\lambda$-abstraction and $\lambda$-application.
The set of typed preterms is denoted with $\T^{pre}$.
We denote by $t\da$ the $\eta$-long $\beta$-normal form
of a simply-typed preterm $t$.
The set $\T$ of {\em (simply-typed) terms}
is defined as $\{ t\da \mid t \in \T^{pre} \}$.
The unique type of a term $t$ is denoted by $\type(t)$.
We write $\V_\alpha$ (resp. $\T_\alpha$) as the set of variables
(resp. terms) of type $\alpha$,
The $\alpha$-equivalence of terms is denoted by $\equiv$.
The set of free variables in a term $t$ is denoted by $FV(t)$.
We assume for convenience that bound variables in a term are all different,
and are disjoint from free variables.
In general, a term $t$ is of the form $\lambda x_1\ldots x_m. a t_1\ldots t_n$
where $a \in \Sigma \cup \V$.
We abbreviate this by $\lambda \ol{x_m}.a(\ol{t_n})$.
For a term $t \equiv \lambda \ol{x_m}. a(\ol{t_n})$,
the symbol $a$, denoted by $top(t)$, is the {\em top symbol} of $t$,
and the set $\{\ol{t_n}\}$, denoted by $\args(t)$,
is the {\em arguments} of $t$.
We define the set $Sub(t)$ of {\em subterms} of $t$
by $\{t\} \cup Sub(s)$ if $t \equiv \lambda x.s$, and
$\{t\} \cup \bigcup_{i=1}^n Sub(t_i)$ if $t \equiv a(\ol{t_n})$.
We use $t \unrhd_\sub s$ to represent $s \in Sub(t)$,
and define $t \rhd_\sub s$ by $t \unrhd_\sub s$ and $t \not\equiv s$.
The set $Pos(t)$ of {\em positions} in a term $t$
is the set of strings over positive integers
inductively defined as
$Pos(\lambda x. t) = \{ \varepsilon \} \cup \{ 1p \mid p \in Pos(t) \}$
and
$Pos(a(\ol{t_n})) = \{\varepsilon\} \cup \bigcup_{i=1}^n \{ip \mid p \in Pos(t_i) \}$.
The {\em prefix order} $\prec$ on positions is defined
by $p \prec q$ iff $pw = q$ for some $w\neq \varepsilon$.
The subterm of $t$ at position $p$ is denoted by $t|_p$.

A term containing a unique occurrence of the special constant $\square_\alpha$ of type $\alpha$
is called a {\em context}, denoted by $C[\,]$.
We use $C[t]$ for the term obtained from $C[\,]$
by replacing $\square_\alpha$ with $t \in \T_\alpha$.
A substitution $\theta$ is a mapping from variables to terms
such that $\theta(X)$ has the type of $X$ for each variable $X$.
We define $dom(\theta) = \{ X \mid X\da \not\equiv \theta(X) \}$
and assume that $dom(\theta)$ is always finite.
A substitution $\theta$ is naturally extended to a mapping from terms to terms.
We use $t\theta$ instead of $\theta(t)$ in the remainder of the paper.
A substitution $\theta$ is said to be a {\em variable permutation}
if $\forall X \in dom(\theta). \exists Y \in dom(\theta).
\theta(X) \equiv Y\da$ and $\theta(X) \equiv \theta(Y) \imply X=Y$ hold.

Following \Cite{MN98}, a
{\em higher-order rewrite rule} is a pair $(l,r)$ of terms,
denoted by $l \to r$,
such that $top(l) \in \Sigma$, $\type(l) = \type(r) \in \B$
and $FV(l) \supseteq FV(r)$. Since, by definition,
terms are in $\eta$-long form, function symbols are always applied
to the same (maximal) number of arguments. Considering non-$\eta$-normal terms
or rules of functional type is outside the scope of this paper.
An HRS is a set of higher-order rewrite rules.
The {\em reduction relation} $\red{R}{}$ of an HRS $R$
is defined by $s \red{R}{} t$
iff $s \equiv C[l\theta\da]$ and $t \equiv C[r\theta\da]$
for some rewrite rule $l \to r \in R$, context $C[\,]$
and substitution $\theta$.
The transitive and reflexive-transitive closures of $\red{R}{}$
are denoted by $\red{R}{+}$ and $\red{R}{*}$, respectively.
An HRS $R$ is said to be {\em finitely branching}
if $\{ t' \mid t \red{R}{} t' \}$ is a finite set for any term $t$.

A term $t$ is said to be
{\em terminating} or {\em strongly normalizing} for an HRS $R$,
denoted by $SN(R,t)$,
if there is no infinite rewrite sequence of $R$ starting from $t$.
We write $SN(R)$ if $SN(R,t)$ holds for any term $t$.
A well-founded relation $>$ on terms is a {\em reduction order}
if $>$ is closed under substitution and context.
We notice that an HRS $R$ is terminating
iff $R \mathrel{\subseteq} {>}$ for some reduction order $>$.

A term $t$ is said to be {\em strongly computable} in an HRS $R$
if $SC(R,t)$ holds,
  which is inductively defined on simple types as follows:
  $SN(R,t)$ if $type(t) \in \B$,
  and $\forall u \in \T_\alpha. (SC(R, u) \imply SC(R, (t u)\da))$
  if $type(t)=\alpha \to \beta$.
We also define the set
  $\T_{SC}^{\args}(R) = \{ t \mid \forall u \in \args(t).SC(R,u) \}$.

Finally, we introduce the proposition required for later proof.
\begin{proposition}\cite{MN98}\label{prop:red-subst}
If $s \red{R}{*} t$ and $\theta \red{R}{*} \theta'$
(i.e. $\forall x \in \V. x\theta \red{R}{*} x\theta'$)
then $s\theta\da \red{R}{*} t\theta'\da$.
\end{proposition}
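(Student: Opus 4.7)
The plan is to split the statement into two independent lemmas and chain them: (i) if $s \red{R}{*} t$ then $s\theta\da \red{R}{*} t\theta\da$ (stability of $\red{R}{*}$ under substitution), and (ii) if $\theta \red{R}{*} \theta'$ then $u\theta\da \red{R}{*} u\theta'\da$ for every term $u$ (stability under rewriting the substitution). Applying (ii) with $u := t$ and composing with (i) yields $s\theta\da \red{R}{*} t\theta\da \red{R}{*} t\theta'\da$, which is the claim.

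For (i), I would induct on the length of the reduction. The base case is reflexivity. For a one-step reduction $s \red{R}{} t$ at some position $p$ using a rule $l \to r \in R$ and a substitution $\sigma$, so that $s \equiv C[l\sigma\da]$ and $t \equiv C[r\sigma\da]$, I would check that $s\theta\da$ contains the instance $l(\sigma\theta)\da$ at the corresponding position (i.e.\ $s\theta\da \equiv C'[l(\sigma\theta)\da]$ for a suitable context $C'$) and $t\theta\da \equiv C'[r(\sigma\theta)\da]$. This relies on the standard compatibility of $\eta$-long $\beta$-normalization with substitution for higher-order patterns \Cite{MN98}, after which the one-step reduction $s\theta\da \red{R}{} t\theta\da$ is realized by the rule $l \to r$ and the substitution $\sigma\theta$.

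For (ii), I would proceed by structural induction on $u$. If $u \equiv X \in \V$, then $u\theta\da \equiv X\theta \red{R}{*} X\theta' \equiv u\theta'\da$ directly from the hypothesis, since each $X\theta$ is already a term (hence in normal form). If $u \equiv \lambda y.\, v$, with $y \notin dom(\theta)$ by our disjointness convention, then $u\theta\da \equiv \lambda y.\, v\theta\da$ and the induction hypothesis on $v$ together with closure of $\red{R}{*}$ under abstraction gives the result. If $u \equiv a(\ol{u_n})$ with $a \in \Sigma$, then $u\theta\da \equiv a(\ol{u_n\theta\da})$ and closure under function application, applied to the individual reductions $u_i\theta\da \red{R}{*} u_i\theta'\da$ obtained by the induction hypothesis, closes the case. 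The delicate case is $u \equiv X(\ol{u_n})$ with $X \in \V$: here applying $\theta$ substitutes a $\lambda$-term for $X$ and the subsequent $\beta$-normalization may completely restructure the term, so $u\theta\da$ is not directly $X\theta(\ol{u_n\theta\da})$. I would handle this by working in the preterm setting $\T^{pre}$, reducing first the head using $X\theta \red{R}{*} X\theta'$ (inside the applicative context $\square\,(\ol{u_n\theta})$) and then each argument by the induction hypothesis, and finally normalizing, appealing to confluence of $\beta\eta$ to see that the resulting preterm reductions descend to the normal forms $u\theta\da$ and $u\theta'\da$.

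The main obstacle is precisely this variable-headed application case, since the interaction between the $R$-reductions and the $\beta\eta$-normalization has to be carefully justified. The cleanest way is to lift the whole argument to $\T^{pre}$ with unnormalized terms, use that both $\red{R}{}$ and $\beta$-reduction are closed under substitution and context there, and then project back by normalization; this is the standard technique in \Cite{MN98} and keeps the inductive case essentially routine once the framework is set up.
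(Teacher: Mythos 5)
The paper does not actually prove this proposition: it is imported verbatim from \Cite{MN98} (it is Theorem~3.9 there, as the paper itself recalls inside the proof of Lemma~\ref{lm:ur1}), so there is no in-paper argument to measure yours against. Your decomposition into (i) stability of $\red{R}{*}$ under substitution instance and (ii) stability of $u\theta\da$ under pointwise rewriting of $\theta$, followed by chaining, is the standard route to this result. The problems are in the two inductive steps.

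In (i), the one-step case cannot be settled by exhibiting a single context $C'$ with $s\theta\da \equiv C'[l(\sigma\theta)\da]$. If the contracted subterm $l\sigma\da$ sits below an argument of a variable $X \in dom(\theta)$, normalizing $s\theta$ may erase that occurrence or duplicate it: for $s \equiv X(l\sigma\da)$ and $X\theta \equiv \lambda y.\,c$ one gets $s\theta\da \equiv c \equiv t\theta\da$ (zero steps), while $X\theta \equiv \lambda y.\,g(y,y)$ forces two steps. So only $s\theta\da \red{R}{*} t\theta\da$ holds, and it must be proved by induction on $s$ (or via a parallel-step formulation), not by tracking one residual position. In (ii), the case $u \equiv X(\ol{u_n})$ with $X \in dom(\theta)$ is circular under structural induction on $u$: writing $X\theta \equiv \lambda\ol{y_n}.w$ and $X\theta' \equiv \lambda\ol{y_n}.w'$, one has $u\theta\da \equiv w\{\ol{y_n := u_n\theta\da}\}\da$ and $u\theta'\da \equiv w'\{\ol{y_n := u_n\theta'\da}\}\da$ with $w \red{R}{*} w'$, so the remaining obligation is exactly the full proposition applied to $w$ and a rewritten substitution --- and $w$ is not a subterm of $u$. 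Confluence of $\beta\eta$ does not supply the missing ingredient, which is the projection of $R$-steps performed inside a $\beta$-redex down to the normal forms; that projection \emph{is} the content of the lemma. Closing the circle needs a different measure (a simultaneous induction with (i), or an induction on the type of the substituted variable, as in \Cite{MN98}), so the plan is repairable, but as stated neither induction goes through.
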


\section{Improved Static Dependency Pair Method}

In this section,
we introduce the static dependency pair method
for plain function-passing (PFP) HRSs \cite{KISB09}
but extend the class of PFP systems.

\REVISED{%
The method in \Cite{KISB09} applies only to PFP systems.
From a technical viewpoint,
we have noticed that
the unclosedness of strong computability with respect to the subterm relation
is the reason why the method is not applicable to every HRS.
Hence we can extend the applicable class for the method
if more strongly computable subterms can be acquired.
From the same motivation,
Blanqui introduced the notion of accssibility 
to design a higher-order path ordering \cite{B00}.
By using the notion of accessibility,
we provide a new definition of PFP
that enlarges the class of systems on which the method can be applied.
}


\begin{definition}[Stable subterms]
The {\em stable subterms} of $t$ are $SSub(t)=SSub_{FV(t)}(t)$ where
$SSub_X(t)=\{t\}\cup SSub'_X(t)$, $SSub'_X(\lambda x.s)=SSub_X(s)$,
$SSub'_X(a(\ol{t_n}))=\bigcup_{i=1}^nSSub_X(t_i)$ if $a\notin X$, and
$SSub'_X(t)=\emptyset$ otherwise.
\end{definition}

\begin{lemma}\label{lem-stable}
(1) $SSub(t)\subseteq Sub(t)$. (2) If $u\in SSub(t)$ and
  $dom(\theta)\subseteq FV(t)$, then $u\theta\da\in
  SSub(t\theta\da)$. (3) If $u\in Sub(t)$ and $t\in SN$, then $u\in
  SN$.
\end{lemma}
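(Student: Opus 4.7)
For (1), the plan is a direct structural induction on $t$. In each clause of the definition, $SSub'_X(t)$ is either empty or evaluates to $SSub_X(s)$ (when $t = \lambda x.s$) or $\bigcup_{i=1}^{n} SSub_X(t_i)$ (when $t = a(\ol{t_n})$ with $a\notin X$), each of which is contained in the corresponding clause of $Sub(t)$ by the inductive hypothesis. Hence $SSub_X(t)\subseteq Sub(t)$, and in particular $SSub(t)\subseteq Sub(t)$.

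For (2), the intuition is that a stable subterm of $t$ is reached from the root along a path that never crosses an application headed by a free variable of $t$; since $dom(\theta)\subseteq FV(t)$, substitution can create $\beta$-redexes only at or below such heads, i.e.\ off the stable path, so the path survives $\eta$-long $\beta$-normalization. I would formalize this by proving, by structural induction on $t$, the following strengthening: \emph{for every $X$ and $\theta$ with $dom(\theta)\subseteq X$, if $u\in SSub_X(t)$ then $u\theta\da\in SSub_{X'}(t\theta\da)$, where $X' = (X\setminus dom(\theta))\cup\bigcup_{y\in X\cap dom(\theta)}FV(y\theta)$}. Specializing to $X=FV(t)$ yields the lemma, since $X' = FV(t\theta\da)$ in that case. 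The case $u=t$ is immediate; the abstraction case $t=\lambda y.s$ reduces, under the freshness convention for $y$, to $t\theta\da = \lambda y.\,s\theta\da$ and an IH call on $s$ with the same $X$ and $\theta$. For the application case $t=a(\ol{t_n})$, either $a\in X$ and $SSub_X(t)=\{t\}$ is trivial, or $a\notin X$, in which case $dom(\theta)\subseteq X$ forces $a\notin dom(\theta)$; then $\theta$ does not turn $a$ into a $\lambda$-abstraction, so no root-level $\beta$-redex is created and $t\theta\da = a(\overline{t_i\theta\da})$, while freshness of $a$ (as a function symbol or bound-in-context variable) ensures $a\notin X'$, so the IH on each $t_i$ completes the step. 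I expect the verification that $\eta$-long $\beta$-normalization does not disturb the top-level shape at stable positions to be the main obstacle, and it is precisely what the assumption $dom(\theta)\subseteq X$ is there to support.

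For (3), I would argue contrapositively. Suppose $u\equiv t|_p$ at some position $p$ of $t$ and $u$ admits an infinite reduction $u\equiv u_0\red{R}{}u_1\red{R}{}u_2\red{R}{}\cdots$. Each step $u_i\equiv C_i[l_i\theta_i\da]\red{R}{}C_i[r_i\theta_i\da]\equiv u_{i+1}$ places its redex at some position $q_i$ inside $u_i$, and the identical redex occurs at position $p\cdot q_i$ of $t_i:=t[u_i]_p$; hence $t_i\red{R}{}t_{i+1}$ with $t_{i+1}\equiv t[u_{i+1}]_p$. Iterating yields an infinite reduction starting from $t=t_0$, contradicting $SN(R,t)$. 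The only routine check is that each $t_i$ is still an $\eta$-long $\beta$-normal term of type $\type(t)$, which follows because $u_i$ is normal of type $\type(u)$ and the ambient context is fixed.
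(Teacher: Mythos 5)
The paper states this lemma without proof, so there is no official argument to compare your attempt against; I can only assess it on its own terms. Parts (1) and (3) are fine. For (1), the structural induction is the evident one (in the blocking case $a\in X$ one has $SSub_X(t)=\{t\}\subseteq Sub(t)$, and the other clauses match the clauses of $Sub$). For (3), lifting an infinite reduction out of $u\equiv t|_p$ through the context $t[\;]_p$ is legitimate because $\red{R}{}$ is closed under arbitrary contexts by definition, and your ``routine check'' is indeed routine: positions in $Pos(t)$ never point into head position, so replacing $t|_p$ by a $\beta$-normal $\eta$-long term of the same type again yields a term.

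For (2) the overall strategy is right, but the strengthened induction hypothesis as you state it --- quantified over \emph{every} $X$ with $dom(\theta)\subseteq X$ --- is literally false, and the place where your own argument papers over this is the application case. There you conclude $a\notin X'$ from ``freshness of $a$ (as a function symbol or bound-in-context variable)'', but nothing in the hypotheses $u\in SSub_X(t)$, $dom(\theta)\subseteq X$ forces a head $a\notin X$ to be a function symbol or a bound variable: $a$ may be a genuinely free variable of $t$ that simply does not belong to $X$, and then nothing prevents $a\in FV(y\theta)$ for some $y\in X\cap dom(\theta)$, whence $a\in X'$ and $SSub'_{X'}(a(\ol{t_i\theta\da}))=\emptyset$ even though $SSub'_X(a(\ol{t_n}))$ recursed into the arguments. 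The repair is to build the missing invariant into the statement: either require $FV(t_0)\subseteq X$ for the ambient top-level term $t_0$ and prove the claim for all subterms of $t_0$ (so that any head $a\notin X$ is a function symbol or a variable bound in $t_0$, hence fresh for $\theta$ by the paper's variable conventions), or add explicitly the hypothesis that every variable in the free-variable support of the current subterm outside $X$ is neither in $dom(\theta)$ nor in any $FV(y\theta)$; this condition is preserved when descending under $\lambda$ and into arguments, and it is exactly what makes $a\notin X'$ go through. Two further small points: your claim $X'=FV(t\theta\da)$ should be $X'\supseteq FV(t\theta\da)$, since $\beta$-normalization can erase free variables; the conclusion still follows because $SSub_Y(s)$ is anti-monotone in $Y$ (a larger blocking set yields fewer stable subterms), but that one-line monotonicity remark needs to be said. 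With these adjustments the proof is correct.
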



\begin{definition}[Safe subterms - New definition]\label{def-safe}
The set of {\em safe subterms} of a term $l$ is
$\safe(l)=\bigcup_{l'\in \args(l)}\{t\da\mid t\in Acc(l'),
FV(t)\subseteq FV(l')\}$
where $t\in Acc(l')$ ($t$ is {\em accessible} in $l'$) if either:
\begin{enumerate}
\setcounter{enumi}{-1}
\item\label{acc-arg}
\REVISED{%
  $t=l'$,
}
\item\label{acc-sub}
\REVISED{%
  $t\in SSub(l')$, $\type(t)\in\B$ and $FV(t)\subseteq FV(l')$,
}
\item\label{acc-abs}
$\lambda x.t\in Acc(l')$ and $x\notin FV(l')$,
\item\label{acc-eta}
$t(x\da)$ $\in Acc(l')$ and $x\notin FV(t)\cup FV(l')$,
\item\label{acc-fun}
$f(\ol{t_n})\in Acc(l')$, $t_i=\lambda\ol{x_k}.t$,
  $\type(t)\in\B$ and $\{\ol{x_k}\}\cap FV(t)=\emptyset$,
\item\label{acc-proj}
$x(\ol{t_n})\in Acc(l')$, $t_i=t$ and $x\notin FV(\ol{t_n})\cup
  FV(l')$.
\end{enumerate}
\end{definition}

Strictly speaking, $\safe(l)$ may not be included in $Sub(l)$ and,
because of (\ref{acc-eta}), accessible terms are $\beta$-normal
preterms not necessarily in $\eta$-long form.


\begin{definition}[Plain Function-Passing \cite{KISB09}]
An HRS $R$ is {\em plain function-passing} (PFP)
if for any $l \to r \in R$ and $Z(\ol{r_n}) \in Sub(r)$ such that $Z \in FV(r)$,
there exists $k\leq n$ such that $Z(\ol{r_k})\da \in \safe(l)$.
\end{definition}

For example, the HRS $R_{\sym{foldl}}$ displayed in the introduction is PFP,
because
\(\safe(\sym{foldl}(\lambda xy.F(x,y), X, \sym{cons}(Y, L))) = \{ \lambda xy.F(x,y), X, \sym{cons}(Y, L), Y, L\}\)
and
\(F\da \equiv \lambda xy.F(x,y) \in \safe(\sym{foldl}(\lambda xy.F(x,y), X, \sym{cons}(Y, L)))\).

The definition of safeness given in \Cite{KISB09} corresponds to
\REVISED{cases (\ref{acc-arg}) and (\ref{acc-sub})}.
This new definition therefore includes much more
terms, mainly higher-order patterns \cite{miller89elp}. This greatly
increases the class of rules that can be handled and the applicability
of the method since it reduces the number of dependency pairs.

For instance, the new definition allows us to handle the following rule:
\[D(\lambda x.\sym{sin}(Fx))y\to D(\lambda x.Fx)y\times\sym{cos}(Fy)\]
Indeed,
$l'=\lambda x.\sym{sin}(Fx)\in Acc(l')$ by (\ref{acc-arg}),
$\sym{sin}(Fx)\in Acc(l')$ by (\ref{acc-abs}),
$Fx\in Acc(l')$ by (\REVISED{\ref{acc-fun}})
and $F\in Acc(l')$ by (\ref{acc-eta}).
Therefore, $\safe(l)=\{l',\lambda x.Fx,y\}$.
With the previous definition, we had $\safe(l)=\{l',y\}$ only.

\REVISED{%
Also, the new definition allows us to handle the following rule:
\[\forall (\lambda x. (P x \land Q x))
  \to \forall (\lambda x. P x) \land \forall (\lambda x. Q x)\]
Indeed,
$l' = \lambda x. (P x \land Q x) \in Acc(l')$ by (\ref{acc-arg}),
$P x \land Q x \in Acc(l')$ by (\ref{acc-abs}),
$P x, Q x \in Acc(l')$ by (\ref{acc-fun}),
and $P, Q \in Acc(l')$ by (\ref{acc-eta}).
Therefore, $\safe(l)=\{l', \lambda x. P x, \lambda x. Q x \}$.
With the previous definition, we had $\safe(l)=\{ l' \}$ only.
}

For the results presented in \Cite{KISB09} to still hold, it suffices
to check that this new definition of safeness still preserves
\REVISED{strong} computability (Lemma 4.3 in \Cite{KISB09}). This can be shown by
following the proof of Lemma 10 in \Cite{B00}.


\begin{lemma}\label{lem:acc}
Let $R$ be an HRS and $l\to r\in R$.
Then
$l\theta\da\in\T_{SC}^{args}(R)$ implies $SC(R, t\theta\da)$
for any $t \in \safe(l)$ \REVISED{and substitution $\theta$}.
\end{lemma}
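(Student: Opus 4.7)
The plan is to proceed by induction on the derivation that $t=u\da$ with $u\in Acc(l')$ for some $l'\in\args(l)$. Since $\type(l)\in\B$, we may write $l=f(\ol{l_n})$, so $\args(l\theta\da)$ consists of the $l_i\theta\da$, and the hypothesis amounts to $SC(R,l_i\theta\da)$ for every $i$. Throughout I will use two standard facts from the Tait--Girard tradition: $SC$ implies $SN$, and every variable (in its $\eta$-long form) is strongly computable.

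The two base cases are direct. In case~(\ref{acc-arg}), $u=l'$, so $u\theta\da=l'\theta\da$ is SC by hypothesis. In case~(\ref{acc-sub}), Lemma~\ref{lem-stable}(2) and~(1) give $u\theta\da\in SSub(l'\theta\da)\subseteq Sub(l'\theta\da)$; since $l'\theta\da$ is SC hence SN, Lemma~\ref{lem-stable}(3) yields $u\theta\da\in SN$, and $\type(u)\in\B$ promotes this to $SC(R,u\theta\da)$.

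The inductive cases split into a routine group and a more delicate group. Cases~(\ref{acc-abs}) and~(\ref{acc-fun}) just peel off an outer constructor: for~(\ref{acc-abs}), the IH gives $\lambda x.u\theta\da$ SC, and applying it to the SC variable $x$ followed by $\beta$-normalisation leaves $u\theta\da$ SC; for~(\ref{acc-fun}), the IH together with $SC\Rightarrow SN$ gives $SN(R,f(\ol{t_n})\theta\da)$, whence the subterm $u\theta\da$ is SN, and $\type(u)\in\B$ upgrades to SC. Cases~(\ref{acc-eta}) and~(\ref{acc-proj}) require a sharper move: after $\alpha$-renaming the accessibility derivation so that the auxiliary $x$ is fresh for $l$ and for $\theta$, redefining $\theta$ at $x$ leaves $l\theta\da$ and hence the SC-hypothesis untouched. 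In~(\ref{acc-eta}), to verify SC of $u\theta\da$ at its functional type, pick an arbitrary SC test value $v$ of type $\type(x)$, put $\theta'=\theta\cup\{x\mapsto v\}$, and apply the IH to obtain $SC(R,(u(x\da))\theta'\da)=SC(R,((u\theta\da)\,v)\da)$, using $x\notin FV(u)$. In~(\ref{acc-proj}), instantiate $x$ by the projection $(\lambda\ol{y_n}.y_i)\da$ at the appropriate type; a single $\beta$-step collapses $(x(\ol{t_n}))\theta'\da$ to $u\theta\da$, and the IH closes the case.

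The main obstacle I anticipate is the fresh-variable bookkeeping in cases~(\ref{acc-eta}) and~(\ref{acc-proj}): one must rigorously arrange that $x$ is disjoint from $l$ and from $\theta$, so that enlarging $\theta$ at $x$ genuinely preserves the SC-status of every $l_i\theta\da$, and one must check that the chosen instance of $x$ (the test value $v$ or the projection) lies in a suitable reducibility candidate. This $\alpha$-renaming discipline, together with the projection trick of~(\ref{acc-proj}), is exactly the shape of the argument in Lemma~10 of~\Cite{B00}, which the paper points to as its template.
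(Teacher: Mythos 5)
Your overall strategy is the paper's: induction on the derivation of $t\in Acc(l')$, with the same treatment of the base cases (\ref{acc-arg}) and (\ref{acc-sub}) via Lemma~\ref{lem-stable}, of case (\ref{acc-fun}) via $SN$ plus the base-type restriction, and of cases (\ref{acc-eta}) and (\ref{acc-proj}) by extending $\theta$ at the auxiliary variable with an arbitrary strongly computable value, respectively with a projection $\lambda\ol{y_n}.y_i$. The one place where your argument as written breaks is case (\ref{acc-abs}), and the break traces back to the fact that you never state the induction invariant precisely. The statement ``$l\theta\da\in\T_{SC}^{\args}(R)$ implies $SC(R,u\theta\da)$'' is simply false for the intermediate terms $u$ of an accessibility derivation: after an (\ref{acc-abs}) step the released variable $x$ lies in $FV(u)\setminus FV(l')$, so $\theta$ may send $x$ to a non-computable term without affecting $l\theta\da$ at all. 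The paper therefore proves the strengthened claim that $t\theta\da$ is strongly computable whenever $t\in Acc(l')$, $l'\theta\da$ is strongly computable, \emph{and} $x\theta$ is strongly computable for every $x\in FV(t)\setminus FV(l')$, discharging the extra hypothesis only at the very end using $FV(t)\subseteq FV(l')$. You need this strengthening too, because your own cases (\ref{acc-eta}) and (\ref{acc-proj}) invoke the induction hypothesis for $\theta'=\theta\uplus\{x\mapsto v\}$ with $v$ an arbitrary strongly computable term or a projection --- i.e.\ precisely for a substitution that does \emph{not} fix the released variable, so a freshness convention on $\theta$ cannot be maintained through the induction.

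Once the invariant is stated this way, your case (\ref{acc-abs}) must be repaired: applying the strongly computable term $(\lambda x.u)\theta\da$ to ``the SC variable $x$'' only yields $SC(R,\,u(\theta\!\setminus\!\{x\})\da)$, i.e.\ the claim for substitutions mapping $x$ to itself. That is not enough. In the derivation $\lambda x.\sym{sin}(Fx)\in Acc(l')$, $\sym{sin}(Fx)\in Acc(l')$, $Fx\in Acc(l')$, $F\in Acc(l')$ from the paper's example, the final (\ref{acc-eta}) step needs $SC(R,(Fx)\theta'\da)$, and hence $SC(R,\sym{sin}(Fx)\theta'\da)$, for $\theta'$ sending $x$ to an arbitrary strongly computable $v$; your (\ref{acc-abs}) step never establishes that instance, so the chain of appeals to the induction hypothesis breaks there. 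The fix is one word --- apply $(\lambda x.u)\theta\da$ to $x\theta$, which is strongly computable by the strengthened hypothesis --- and this is exactly what the paper's laconic ``by definition of computability'' means in that case. Everything else in your proposal matches the paper's proof.
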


\begin{proof}
We first prove that $t\theta\da$ is \REVISED{strongly} computable
whenever $t\in Acc(l')$, $l'\theta\da$ is \REVISED{strongly}
computable, and $x\theta$ is \REVISED{strongly} computable
\REVISED{for any} $x\in FV(t)\setminus FV(l')$. Wlog we can assume
that $dom(\theta)\subseteq FV(t)$.
\REVISED{We prove the claim by induction on the definition of $Acc$}.

\begin{enumerate}
\setcounter{enumi}{-1}
\item
Immediate.
\item
Since $l'\theta\da$ is \REVISED{strongly} computable,
$l'\theta\da$ is strongly normalizing.
\comment{
Since $FV(t)\subseteq FV(l')$,
we can assume that $dom(\theta)\subseteq FV(l')$ wlog.
Hence, by Lemma}
By Lemma \ref{lem-stable}, $t\theta\da\in Sub(l'\theta\da)$ and $t\theta\da$ is
SN. Therefore, since $type(t)\in\B$, $t\theta\da$ is \REVISED{strongly} computable.
\item
By definition of computability.
\item
\REVISED{We have $type(t)=\alpha\to\beta$. So, let $u\in\T_\alpha$
  strongly computable and $\theta'=\theta\uplus\{x\mapsto u\}$
  ($x\notin dom(\theta)$ since $x\notin FV(t)$). Since $x\notin
  FV(t)$, we have $(t\theta\da u)\da=(t(x\da))\theta'\da$. By IH,
  $(t(x\da))\theta'\da$ is strongly computable. Therefore,
  $t\theta\da$ is strongly computable.}
\item
Since \REVISED{strong} computability on base types is equivalent to SN and
$\{\ol{x_k}\}\cap FV(t)\!=\!\emptyset$.
\item
\REVISED{The term $p_i=\lambda\ol{y_n}.y_i$ can easily be proved
  strongly computable. Then, let $\theta'=\theta\uplus\{x\mapsto
  p_i\}$ ($x\notin dom(\theta)$ since $x\notin FV(\ol{t_n})$). Since
  $x\notin FV(t_i)$, we have
  $(x(\ol{t_n}))\theta'\da=t_i\theta\da$. By induction hypothesis,
  $(x(\ol{t_n}))\theta'\da$ is strongly computable. Therefore,
  $t\theta\da=t_i\theta\da$ is strongly computable.}
\end{enumerate}

\noindent
Let now $u\in\safe(l)$. We have $u \equiv t\da$ for some $t\in
Acc(l')$ and $l'\in \args(l)$ with $FV(t)\subseteq FV(l')$. The term
$l'\theta\da$ is \REVISED{strongly} computable since
$l\theta\da\in\T_{SC}^{args}(R)$. Since $FV(t)\subseteq FV(l')$, there
is no $x\in FV(t)\setminus FV(l')$. Therefore, $u\theta\da \equiv
t\theta\da$ is \REVISED{strongly} computable.\qed
\end{proof}

This definition of safeness can be further improved (in case
\ref{acc-fun}) by using more complex interpretations for base types
than just the set of strongly normalizing terms, but this requires to
check more properties\REVISED{\cite{BJR08}}.
We leave this for future work.


We now recall the definitions of static dependency pair, static
recursion component and reduction pair, and the basic theorems
concerning these notions, including the subterm criterion
\cite{KISB09}.

\begin{definition}[Static dependency pair \cite{KISB09}]\label{def-dp}
Let $R$ be an HRS.
All top symbols of the left-hand sides of rewrite rules,
denoted by $\D_R$, are called {\em defined symbols}.
\comment{whereas all other function symbols, denoted by $\C_R$, are {\em constructors}.}

We define the {\em marked} term $t^\sharp$
by $f^\sharp(\ol{t_n})$ if $t$ has the form $f(\ol{t_n})$ with $f \in \D_R$;
otherwise $t^\sharp \equiv t$.
Then, let $\D_R^\sharp=\{ f^\sharp \mid f \in \D_R \}$.

We also define the set of {\em candidate subterms} as follows:
\(Cand(\lambda \ol{x_m}.a(\ol{t_n}))
  = \{ \lambda \ol{x_m}.a(\ol{t_n}) \}
    \cup \bigcup^{n}_{i=1} Cand(\lambda \ol{x_m}. t_i)\).

Now, a pair $\pair{l^{\sharp},\ a^{\sharp}(\ol{r_n})}$,
denoted by $l^{\sharp} \to a^{\sharp}(\ol{r_n})$,
is said to be a {\em static dependency pair} in $R$
if there exists $l \to r \in R$ such that
$\lambda \ol{x_m}. a(\ol{r_n}) \in Cand(r)$,
$a \in \D_R$, and
$a(\ol{r_k}) \da$ $\notin \safe(l)$ for all $k\leq n$.
We denote by $SDP(R)$ the set of static dependency pairs in $R$.
\end{definition}

\begin{example} \label{ex:SDP}
Let $R_{\sym{ave}}$ be the following PFP-HRS:
\REVISED{%
\[R_{\sym{ave}} = R_{\sym{sum}} \cup R_{\sym{len}} \cup \left\{\begin{array}{rcl}
  \sym{sub}(X,0)                    &\to& X \\
  \sym{sub}(0,Y)                    &\to& 0 \\
  \sym{sub}(\sym{s}(X),\sym{s}(Y))  &\to& \sym{sub}(X,Y)  \\
  \sym{div}(0,\sym{s}(Y))           &\to& 0 \\
  \sym{div}(\sym{s}(X),\sym{s}(Y))  &\to& \sym{s}(\sym{div}(\sym{sub}(X,Y),\sym{s}(Y))) \\
  \sym{ave}(L)                      &\to& \sym{div}(\sym{sum}(L),\sym{len}(L))
\end{array}\right.\]
}
Then, the set $SDP(R_\sym{ave})$ consists of the following eleven pairs:
\[\left\{\begin{array}{rcl}
    \sym{foldl}^\sharp(\lambda xy.F(x,y), X, \sym{cons}(Y,L))
        &\to& \sym{foldl}^\sharp(\lambda xy.F(x,y), F(X,Y), L) \\
    \sym{add}^\sharp(\sym{s}(X),Y)
        &\to&   \sym{add}^\sharp(X,Y) \\
    \sym{sum}^\sharp(L)
        &\to&   \sym{foldl}^\sharp(\lambda xy.\sym{add}(x,y), 0, L) \\
    \sym{sum}^\sharp(L)
        &\to&   \sym{add}^\sharp(x,y) \\
    \sym{sub}^\sharp(\sym{s}(X),\sym{s}(Y))
        &\to&   \sym{sub}^\sharp(X,Y) \\
    \sym{div}^\sharp(\sym{s}(X),\sym{s}(Y))
        &\to&   \sym{div}^\sharp(\sym{sub}(X,Y),\sym{s}(Y)) \\
    \sym{div}^\sharp(\sym{s}(X),\sym{s}(Y))
        &\to&   \sym{sub}^\sharp(X,Y) \\
    \sym{len}^\sharp(L)
        &\to&   \sym{foldl}^\sharp(\lambda xy.\sym{s}(x),0,L) \\
    \sym{ave}^\sharp(L)
        &\to&   \sym{div}^\sharp(\sym{sum}(L),\sym{len}(L)) \\
    \sym{ave}^\sharp(L)
        &\to&   \sym{sum}^\sharp(L) \\
    \sym{ave}^\sharp(L)
        &\to&   \sym{len}^\sharp(L) 
\end{array}\right.\]
\end{example}


\begin{definition}[Static dependency chain \cite{KISB09}]
Let $R$ be an HRS.
A sequence $u_0^\sharp \to v_0^\sharp, u_1^\sharp \to v_1^\sharp, \ldots$
 of static dependency pairs is a {\em static dependency chain} in $R$
 if there exist $\theta_0,\theta_1,\ldots$ such that
 $v_i^\sharp\theta_i\da \red{R}{*} u_{i+1}^\sharp\theta_{i+1}\da$
 and $u_i\theta_i\da, v_i\theta_i\da \in \T_{SC}^{\args}(R)$ for all $i$.
\end{definition}

Note that, for all $i$, $u_i^\sharp\theta_i$ and $v_i^\sharp\theta_i$
are terminating, since strong computability implies termination.

\begin{proposition}\cite{KISB09}
Let $R$ be a PFP-HRS.
If there exists no infinite static dependency chain then $R$ is terminating.
\end{proposition}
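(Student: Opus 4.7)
The plan is to prove the contrapositive: from the failure of $SN(R)$ we extract an infinite static dependency chain. Since every rewrite rule has base-type sides and terms are $\eta$-long, every $f$-headed term for $f\in\D_R$ has base type, so strong computability at the top coincides with strong normalization. A Tait--Girard argument then shows that if every tuple $(f,\ol{t_n})$ with $f\in\D_R$ and each $t_i$ strongly computable gives $(f(\ol{t_n}))\da$ strongly computable, then every term is strongly computable, hence $SN(R)$ holds. So failure of $SN(R)$ produces a \emph{bad} tuple $(f,\ol{t_n})$ of this form with $(f(\ol{t_n}))\da$ not strongly computable. Pick one of minimal rank in the product extension of $\red{R}{}$ to $n$-tuples, which is well founded because each $t_i$ is strongly normalizing.

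For this minimal bad tuple, reductions strictly inside $\ol{t_n}$ give tuples that are no longer bad, hence strongly computable; consequently the non-termination of $(f(\ol{t_n}))\da$ must go through a root rewrite, so there exist $l\to r\in R$ and a substitution $\theta$ with $l\theta\da\equiv(f(\ol{t_n}))\da$ (thus $l\theta\da\in\T_{SC}^{\args}(R)$) such that $(r\theta)\da$ is not strongly computable. We now propagate non-strong-computability into $r$: by extending $\theta$ with strongly computable witnesses (projection terms $\lambda\ol{y_k}.y_j$ suffice) for any $\lambda$-bound variables encountered along the way, we reach an outermost candidate subterm $\lambda\ol{x_m}.a(\ol{r_n})\in Cand(r)$ such that $(a(\ol{r_n})\theta')\da$ is not strongly computable under the extended substitution $\theta'$. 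Outermost-ness ensures that every $(r_i\theta')\da$ is strongly computable.

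A case split on the head $a$ now produces the first dependency pair. If $a=x_j$ is one of the enclosing $\ol{x_m}$, then $\theta'$ replaces it with a projection, which collapses $(a(\ol{r_n})\theta')\da$ into one of the $(r_i\theta')\da$, contradicting non-strong-computability. If $a$ is a free variable $Z$ of $r$, then $Z(\ol{r_n})\in Sub(r)$ with $Z\in FV(r)$, so the PFP condition supplies some $k\leq n$ with $Z(\ol{r_k})\da\in\safe(l)$; Lemma~\ref{lem:acc} delivers strong computability of $(Z(\ol{r_k})\theta')\da$, and closing under application to the strongly computable $(r_{k+1}\theta')\da,\ldots,(r_n\theta')\da$ again contradicts the choice of the subterm. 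Hence $a\in\D_R$, and the same PFP plus Lemma~\ref{lem:acc} argument forces $a(\ol{r_k})\da\notin\safe(l)$ for every $k\leq n$, so $l^\sharp\to a^\sharp(\ol{r_n})\in SDP(R)$. Setting $\theta_0=\theta'$ gives the first link; the term $(a^\sharp(\ol{r_n})\theta_0)\da$ then plays the role of the next minimal bad tuple and the extraction repeats, producing the infinite chain, with each required reduction $v_i^\sharp\theta_i\da\red{R}{*}u_{i+1}^\sharp\theta_{i+1}\da$ coming from the internal reductions preceding the next head rewrite (Proposition~\ref{prop:red-subst} is used to push reductions through $\beta\eta$-normalization).

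The main obstacle will be the inward-propagation step: one must simultaneously (a) build the extended substitution $\theta'$ from strongly computable witnesses for the intervening bound variables in a type-correct way, (b) maintain $\beta$-normal $\eta$-long form across the substitution and normalization, and (c) guarantee that argument-strong-computability is preserved when descending into $r$. Organising the outermost choice so that all three invariants hold is the only delicate part; once it is in place, Lemma~\ref{lem:acc} and the PFP hypothesis dispatch the bound- and free-variable head cases mechanically, and the infinite chain emerges by iteration.
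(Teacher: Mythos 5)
Your argument is correct and is essentially the paper's own proof unfolded: the paper disposes of this proposition in one line by observing that the proof of the correspondence theorem (Theorem 5.23 in \cite{KISB09}) goes through verbatim once Lemma~\ref{lem:acc} replaces the old safeness-preserves-computability lemma, and your reconstruction of that Tait--Girard extraction (minimal non-computable tuple, forced root step, outermost non-computable candidate, case split on the head using PFP and Lemma~\ref{lem:acc}) is exactly the deferred argument, with Lemma~\ref{lem:acc} invoked at precisely the two places where safeness matters. The only small inaccuracy is the claim that projections $\lambda\ol{y_k}.y_j$ always suffice as computable witnesses for the intervening bound variables --- such a projection need not exist at every type, but fresh variables in $\eta$-long form (or any strongly computable inhabitant) do the job, and the bound-variable-head case already collapses by the definition of strong computability at arrow types without needing a projection.
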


\REVISED{%
\begin{proof}
By using Lemma \ref{lem:acc} instead of Lemma 4.3 in \Cite{KISB09},
the proof of the correspondence theorem (Theorem 5.23 in \Cite{KISB09})
still holds.
\qed
\end{proof}
}


\begin{definition}[Static recursion component \cite{KISB09}]
Let $R$ be an HRS.
The {\em static dependency graph} of $R$ is the directed graph
 in which nodes are $SDP(R)$ and there exists an arc
 from $u^\sharp \to v^\sharp$ to $u'^\sharp \to v'^\sharp$
 if the sequence $u^\sharp \to v^\sharp,\ u'^\sharp \to v'^\sharp$ is a static dependency chain.

A {\em static recursion component} is a set of nodes
 in a strongly connected subgraph of the static dependency graph of $R$.
We denote by $SRC(R)$ the set of static recursion components of $R$.

A static recursion component $C$ is {\em non-looping}
 if there exists no infinite static dependency chain
 in which only pairs in $C$ occur
 and every $u^\sharp \to v^\sharp \in C$ occurs infinitely many times.
\end{definition}

\begin{proposition}\cite{KISB09} \label{pr:SDP1}
Let $R$ be a PFP-HRS such that there exists no infinite path
in the static dependency graph.
If all static recursion components are non-looping, then $R$ is terminating.
\end{proposition}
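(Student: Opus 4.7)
The plan is to invoke the preceding Proposition: it suffices to rule out any infinite static dependency chain in $R$. I would start by assuming, toward a contradiction, that $\chi = (u_i^\sharp \to v_i^\sharp)_{i\geq 0}$ is such a chain. Since every length-two segment $u_i^\sharp \to v_i^\sharp,\ u_{i+1}^\sharp \to v_{i+1}^\sharp$ is itself a static dependency chain (witnessed by the restrictions of the $\theta_i$), the definition of the static dependency graph places an arc from the $i$th pair to the $(i{+}1)$st. Hence $\chi$ traces an infinite walk through the graph.

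The key combinatorial step is to trap $\chi$ inside a single recursion component. Because the graph admits no infinite path, the walk $\chi$ can visit only finitely many distinct vertices: otherwise the sequence of first occurrences of pairwise distinct vertices in $\chi$ would form an infinite simple path, contradicting the hypothesis. Let $C$ be the nonempty, finite set of pairs appearing infinitely often in $\chi$, and pick $N$ so that no pair outside $C$ occurs in $\chi$ beyond index $N$. The tail $(u_i^\sharp \to v_i^\sharp)_{i\geq N}$ then lies entirely in $C$ and hits every element of $C$ infinitely often.

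Next I would verify $C\in SRC(R)$: for any $p,p'\in C$, pick $j\geq N$ with the $j$th pair equal to $p$ and $k>j$ with the $k$th pair equal to $p'$; the intermediate walk stays inside $C$ and therefore exhibits an arc-path from $p$ to $p'$ within $C$, so $C$ is strongly connected. Finally, the tail $(u_i^\sharp \to v_i^\sharp)_{i\geq N}$ (together with the substitutions $\theta_i$ for $i\geq N$) is itself an infinite static dependency chain, uses only pairs of $C$, and visits every pair of $C$ infinitely often. This contradicts the non-loopingness of $C$, and the proof is complete. The main obstacle is precisely this trapping step: one must use the no-infinite-path hypothesis to ensure $\chi$ cannot wander forever through ever-new vertices, and only then can the strongly connected component $C$ be extracted.
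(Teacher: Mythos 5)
The paper itself offers no proof of this proposition---it is imported from \cite{KISB09}---so I can only assess your argument on its own terms. Its skeleton is the standard one and most of it is sound: reduce to the preceding proposition by ruling out infinite static dependency chains, observe that an infinite chain induces an infinite walk in the static dependency graph (each length-two segment is indeed a chain, witnessed by the corresponding $\theta_i$, hence an arc), extract the set $C$ of pairs occurring infinitely often, verify that the walk's tail makes the subgraph on $C$ strongly connected so that $C\in SRC(R)$, and contradict the non-loopingness of $C$ with that tail. Those steps are all correct.

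The gap is in the trapping step. You justify ``the walk visits only finitely many distinct vertices'' by asserting that otherwise the first occurrences of pairwise distinct vertices would form an infinite simple path. That inference is false: consecutive first occurrences in a walk need not be joined by arcs (in the walk $a,b,a,c$ the graph need only contain $a\to c$, not $b\to c$). Concretely, take nodes $a,b_1,b_2,\ldots$ with arcs $a\to b_i$ and $b_i\to a$ for all $i$: every simple path has at most two arcs, yet the walk $b_1,a,b_2,a,b_3,\ldots$ visits infinitely many distinct vertices, and its set of infinitely occurring vertices is $\{a\}$, which cannot absorb any tail of the walk---so the remainder of your argument would also collapse on such a graph. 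When $SDP(R)$ is finite (the situation in all of the paper's examples) the trapping step is trivial and your proof goes through; but for infinite systems, which is the only situation in which the ``no infinite path'' hypothesis is not vacuous, you must either use that hypothesis in some other way or make precise a reading of ``path'' under which your inference is valid (note that reading ``path'' as a walk with repetitions allowed would forbid the infinite walk outright and render the whole component analysis unnecessary). As written, the justification you give for this one step does not hold up.
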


\begin{figure}[t]
\begin{center}
\begin{picture}(350,166)(0,0)

\put(0,150){\framebox(145,16){
  $\sym{ave}^\sharp(L) \to \sym{div}^\sharp(\sym{sum}(L),\sym{len}(L))$}}
\put(145,158){\vector(1,0){20}}
\put(5,150){\line(0,-1){22}}
\put(5,128){\vector(1,0){10}}
\put(165,150){\framebox(185,16){
  $\sym{div}^\sharp(\sym{s}(X),\sym{s}(Y))  \to \sym{div}^\sharp(\sym{sub}(X,Y),\sym{s}(Y))$}}
\put(176,150){\vector(-3,-2){21}}
\put(320,150){\line(0,-1){20}}
\put(320,130){\line(1,0){20}}
\put(340,130){\vector(0,1){20}}

\put(15,120){\framebox(140,16){
  $\sym{div}^\sharp(\sym{s}(X),\sym{s}(Y))  \to \sym{sub}^\sharp(X,Y)$}}
\put(155,128){\vector(1,0){15}}
\put(170,120){\framebox(140,16){
  $\sym{sub}^\sharp(\sym{s}(X),\sym{s}(Y))  \to \sym{sub}^\sharp(X,Y)$}}
\put(275,120){\line(0,-1){20}}
\put(275,100){\line(1,0){20}}
\put(295,100){\vector(0,1){20}}

\put(40,90){\framebox(90,16){
  $\sym{ave}^\sharp(L) \to \sym{sum}^\sharp(L)$}}
\put(130,98){\vector(1,0){25}}
\put(40,98){\line(-1,0){20}}
\put(20,98){\vector(0,-1){22}}
\put(155,90){\framebox(100,16){
  $\sym{sum}^\sharp(L)   \to \sym{add}^\sharp(x,y)$}}
\put(200,90){\line(0,-1){22}}
\put(200,68){\vector(1,0){20}}

\put(0,60){\framebox(170,16){
  $\sym{sum}^\sharp(L)
    \to \sym{foldl}^\sharp(\lambda xy.\sym{add}(x,y), 0, L)$}}
\put(170,68){\line(1,0){20}}
\put(190,68){\vector(0,-1){22}}
\put(220,60){\framebox(130,16){
  $\sym{add}^\sharp(\sym{s}(X),Y)    \to \sym{add}^\sharp(X,Y)$}}
\put(315,76){\line(0,1){20}}
\put(315,96){\line(1,0){20}}
\put(335,96){\vector(0,-1){20}}

\put(50,30){\framebox(300,16){
  $\sym{foldl}^\sharp(\lambda xy.F(x,y),X,\sym{cons}(Y,L))
    \to \sym{foldl}^\sharp(\lambda xy.F(x,y),F(X,Y),L)$}}
\put(300,30){\line(0,-1){20}}
\put(300,10){\line(1,0){20}}
\put(320,10){\vector(0,1){20}}

\put(0,0){\framebox(85,16){
  $\sym{ave}^\sharp(L) \to \sym{len}^\sharp(L)$}}
\put(85,8){\vector(1,0){20}}
\put(105,0){\framebox(145,16){
  $\sym{len}^\sharp(L) \to \sym{foldl}^\sharp(\lambda xy.\sym{s}(x),0,L)$}}
\put(250,8){\line(1,0){20}}
\put(270,8){\vector(0,1){22}}

\end{picture}
\end{center}
\caption{The static dependency graph of $R_{\sym{ave}}$}\label{fig:SDG}
\end{figure}
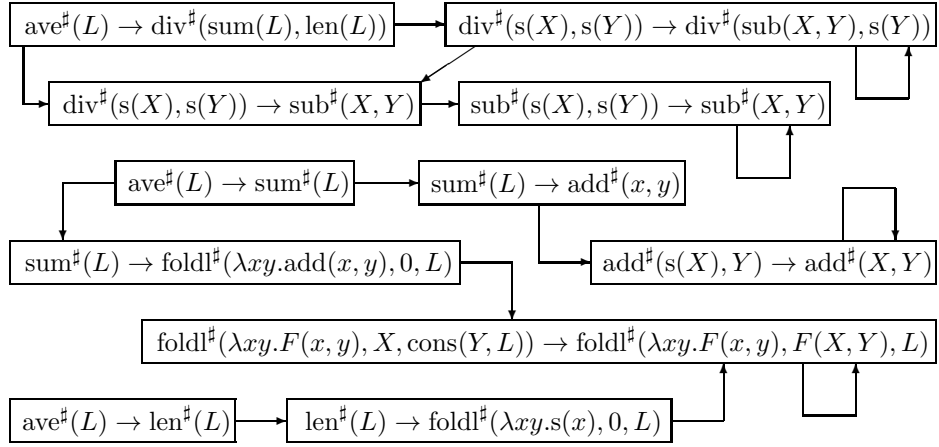
\vspace*{-1pt}
\begin{example} \label{ex:SRC}
For the PFP-HRS $R_{\sym{ave}}$ in Example \ref{ex:SDP},
the static dependency graph of $R_{\sym{ave}}$ is shown in Fig. \ref{fig:SDG}.
Then the set $SRC(R_\sym{ave})$ consists of
the following four static recursion components:
\[\begin{array}{l}
  \left\{\begin{array}{l}
    \sym{foldl}^\sharp(\lambda xy.F(x,y), X, \sym{cons}(Y,L))
      \to \sym{foldl}^\sharp(\lambda xy.F(x,y), F(X,Y), L)
  \end{array}\right\}\\
  \left\{\begin{array}{l}
    \sym{add}^\sharp(\sym{s}(X),Y) \to \sym{add}^\sharp(X,Y)
  \end{array}\right\}\\
  \left\{\begin{array}{l}
    \sym{sub}^\sharp(\sym{s}(X),\sym{s}(Y)) \to \sym{sub}^\sharp(X,Y)
  \end{array}\right\}\\
  \left\{\begin{array}{l}
    \sym{div}^\sharp(\sym{s}(X),\sym{s}(Y)) \to \sym{div}^\sharp(\sym{sub}(X,Y),\sym{s}(Y)
)
  \end{array}\right\}
\end{array}\]
\end{example}


In order to prove the non-loopingness of components,
the notions of subterm criterion and reduction pair have been proposed.
The subterm criterion was introduced on TRSs \cite{HM07},
and then extended to STRSs \cite{KS07} and HRSs \cite{KISB09}.
Reduction pairs \cite{KNT06}
are an abstraction of the notion of weak-reduction order \cite{AG00}.

\begin{definition}[Subterm criterion \cite{KISB09}]
Let $R$ be an HRS and $C \in SRC(R)$.
We say that $C$ satisfies the {\em subterm criterion}
 if there exists a function $\pi$
 from $\D_R^\sharp$ to non-empty sequences of positive integers such that:
\begin{itemize}
\item
  $u|_{\pi(top(u^\sharp))} \rhd_\sub v|_{\pi(top(v^\sharp))}$
  for some $u^\sharp \to v^\sharp \in C$,
\item
  and the following conditions hold for every $u^\sharp \to v^\sharp \in C$:
  \begin{itemize}
  \item
    $u|_{\pi(top(u^\sharp))} \unrhd_\sub v|_{\pi(top(v^\sharp))}$,
  \item
    $\forall p \prec \pi(top(u^\sharp)). top(u|_p) \notin FV(u)$,
  \item
    and $\forall q \prec \pi(top(v^\sharp)). q = \varepsilon \lor top(v|_q) \notin FV(v) \cup \D_R$.
  \end{itemize}
\end{itemize}
\end{definition}


\begin{definition}[Reduction pair, Weak reduction order \cite{AG00,KNT06}]
A pair $(\gsim,>)$ of relations
is a {\em reduction pair}
if $\gsim$ and $>$ satisfy the following properties:
\begin{itemize}
\item
  $>$ is well-founded and closed under substitutions,
\item
  $\gsim$ is closed under contexts and substitutions,
\item
  and ${\gsim \cdot >} \subseteq {>}$ or ${> \cdot \gsim} \subseteq {>}$.
\end{itemize}
In particular,
$\gsim$ is a {\em weak reduction order}
if $(\gsim,\gsim \setminus \lsim)$ is a reduction pair.
\end{definition}

\comment{It is often required that $\gsim$ is a quasi-order
and $>$ is a strict order for a reduction pair $(\gsim,>)$.
Although this definition does not require them,
the pair $(\gsim^*,>^+)$ fulfills the requirements.}

\begin{proposition}\cite{KISB09}\label{pr:SDP2}
Let $R$ be a PFP-HRS
such that there exists no infinite path in the static dependency graph.
Then, $C\in SRC(R)$ is non-looping
if $C$ satisfies one of the following properties:
\begin{itemize}
\item
  $C$ satisfies the subterm criterion.
\item
  There is a reduction pair $(\gsim,>)$ such that
  $R \subseteq {\gsim}$,
  $C \subseteq {{\gsim} \cup {>}}$ and
  ${{C} \cap {>}} \neq \emptyset$.
\end{itemize}
\end{proposition}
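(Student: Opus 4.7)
The plan is to adapt the proof of the corresponding proposition in \Cite{KISB09} to the improved notion of PFP, the same way the proof of Proposition \ref{pr:SDP1} is simply reduced to an appeal to the KISB09 argument modulo replacing Lemma 4.3 there by our Lemma \ref{lem:acc}. My first step would therefore be to identify all points where the KISB09 proof invokes strong computability of arguments via safeness, and to check that each such invocation reduces to the hypothesis $u_i\theta_i\da, v_i\theta_i\da \in \T_{SC}^{\args}(R)$ of a static dependency chain, which is unchanged. Since both claims concerning non-loopingness are proved by contradiction from an assumed infinite chain, the new (wider) class of PFP systems affects only \emph{which} chains exist, not the internal argument given such a chain.

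For the subterm criterion case, I would assume an infinite chain $u_0^\sharp\to v_0^\sharp, u_1^\sharp\to v_1^\sharp,\ldots$ using only pairs of $C$ with each such pair occurring infinitely often, and consider the sequence of projected subterms $s_i = (u_i\theta_i\da)|_{\pi(top(u_i^\sharp))}$ and $t_i = (v_i\theta_i\da)|_{\pi(top(v_i^\sharp))}$. The two conditions $\forall p\prec\pi(top(u^\sharp)).top(u|_p)\notin FV(u)$ and $\forall q\prec\pi(top(v^\sharp)). q=\varepsilon \lor top(v|_q)\notin FV(v)\cup\D_R$ ensure that these projections commute with substitution and with reduction above the projection point, so the conditions $u|_{\pi}\unrhd_\sub v|_\pi$ (resp.\ $\rhd_\sub$) lift to $s_i\unrhd_\sub t_i$ (resp.\ $\rhd_\sub$) and Proposition \ref{prop:red-subst} gives $t_i\red{R}{*}s_{i+1}$. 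By Lemma \ref{lem-stable}(3) and strong normalisation of $u_0\theta_0\da$ (which follows from strong computability of its arguments plus Lemma \ref{lem:acc} applied to the arguments of the left-hand side), the projected sequence is SN, while the subterm-criterion hypothesis forces $\rhd_\sub$ to occur infinitely often, yielding a contradiction.

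For the reduction pair case, I would start from an infinite chain restricted to $C$ with each pair of $C$ occurring infinitely often. Closure of $\gsim$ under context and substitution together with $R\subseteq\gsim$ gives $v_i^\sharp\theta_i\da \gsim u_{i+1}^\sharp\theta_{i+1}\da$ for the rewrite part of the chain, and closure under substitution of the relevant relation combined with $C\subseteq \gsim\cup >$ gives $u_i^\sharp\theta_i\da \mathrel{(\gsim\cup >)} v_i^\sharp\theta_i\da$. Since ${C}\cap{>}\neq\emptyset$ and every pair of $C$ occurs infinitely often, the $>$-relation appears infinitely often along the whole composed sequence, and the absorption property ${\gsim\cdot >}\subseteq {>}$ (or its dual) lets me extract an infinite strictly $>$-descending sequence, contradicting well-foundedness of $>$.

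The only delicate point I anticipate is the subterm-criterion case, where I must check that the free-variable and defined-symbol side-conditions on $\pi$ really do let the projection pass through the rewrite step $v_i^\sharp\theta_i\da\red{R}{*}u_{i+1}^\sharp\theta_{i+1}\da$ without the projected position becoming erased or duplicated. This is exactly the technical content of the KISB09 argument, and since the side-conditions are purely syntactic properties of the terms in $C$ (and do not mention safeness), no modification is needed: the relaxation of PFP only enlarges the set of admissible rules whose dependency pairs are collected, and the proof of non-loopingness for a given $C$ is insensitive to this.
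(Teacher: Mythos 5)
Your proposal is correct and follows essentially the same route as the paper, which simply imports this proposition from \Cite{KISB09} and relies on the observation (made explicit for the preceding proposition) that the only ingredient affected by the new PFP definition is Lemma 4.3 of \Cite{KISB09}, now replaced by Lemma \ref{lem:acc}; the internal non-loopingness arguments for the subterm criterion and for reduction pairs are, as you say, insensitive to the enlarged class. Your sketches of those two internal arguments match the standard \Cite{KISB09} proofs, so no further comment is needed.
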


\begin{example} \label{ex:subterm}
Let $\pi(\sym{foldl}^\sharp) = 3$
and $\pi(\sym{add}^\sharp) = \pi(\sym{sub}^\sharp) = 1$.
Then, every static recursion component $C$ except the one for $\sym{div}$
(cf. Example \ref{ex:SRC})
satisfies the subterm criterion in the underlined positions below.
Hence, these static recursion components are non-looping.
\[\begin{array}{l}
  \left\{
    \sym{foldl}^\sharp(\lambda xy.F(x,y), X, \ul{\sym{cons}(Y,L)})
      \to \sym{foldl}^\sharp(\lambda xy.F(x,y), F(X,Y), \ul{L})
  \right\}\\
  \left\{
    \sym{add}^\sharp(\ul{\sym{s}(X)},Y) \to \sym{add}^\sharp(\ul{X},Y)
  \right\}
  ~~~
  \left\{
    \sym{sub}^\sharp(\ul{\sym{s}(X)},\sym{s}(Y)) \to \sym{sub}^\sharp(\ul{X},Y)
  \right\}
\end{array}\]
\end{example}

~\section{Argument Filterings}

An argument filtering generates a weak reduction order
from an arbitrary reduction order.
The method was first proposed on TRSs \cite{AG00},
and then extended to STRSs \cite{K01,KS09}.
Since this extension has the problem
that this method may destroy the well-typedness of terms,
Kusakari \REVISED{and Sakai} improved the method
so that the well-typedness is never destroyed \cite{KS09}.
In this section, we expand this technique to HRSs.

\begin{definition}
An {\em argument filtering function} is a function $\pi$
such that, for every $f \in \Sigma$ of type
$\alpha_1 \to \cdots \to \alpha_n \to \beta$ with $\beta \in \B$,
$\pi(f)$ is
either a positive integer $i\le n$ if $\alpha_i=\beta$,
or a list of positive integers $[i_1,\ldots,i_k]$
with $i_1,\ldots,i_k \leq n$.
Then, we extend the function $\pi$ to terms by taking:

\[\pi(\lambda \ol{x_m}.a(\ol{t_n})) \equiv
  \left\{\begin{array}{ll}
    \lambda \ol{x_m}. \pi(t_i)
      & \mbox{ if $a \in \Sigma$ and $\pi(a)=i$} \\
    \lambda \ol{x_m}. a(\pi(t_{i_1}),\ldots,\pi(t_{i_k}))
      & \mbox{ if $a \in \Sigma$ and $\pi(a)=[i_1,\ldots,i_k]$} \\
    \lambda \ol{x_m}. a(\pi(t_1),\ldots,\pi(t_n))
      & \mbox{ if $a \in \V$}
  \end{array}\right.\]

\ \\
Given an argument filtering $\pi$ and a binary relation $>$,
we define $s \gsim_\pi t$ by $\pi(s) > \pi(t)$ or $\pi(s) \equiv \pi(t)$,
and $s >_\pi t$ by $\pi(s) > \pi(t)$.
We also define the substitution $\theta_\pi$
by $\theta_\pi(x) \equiv \pi(\theta(x))$.
Finally, we define the typing function $\type_\pi$ after argument filtering
as $\type_\pi(a) = \alpha_{i_1} \to \cdots \to \alpha_{i_k} \to \beta$
if $a \in \Sigma$, $\pi(a)=[i_1,\ldots,i_k]$,
$\type(a) = \alpha_1 \to \cdots \alpha_n \to \beta$ and $\beta \in \B$;
otherwise $\type_\pi(a) = \type(a)$.
\end{definition}

In the examples, except stated otherwise, $\pi(f)=[1,\ldots,n]$ if
$\type(f) = \alpha_1 \to \cdots \to \alpha_n \to \beta$ and $\beta \in
\B$ (no argument is removed).

For instance,
if $\pi(\sym{sub}) = [1]$
then
\(\pi(\sym{div}^\sharp(\sym{sub}(X,Y),\sym{s}(Y)))
  \equiv \sym{div}^\sharp(\sym{sub}(X),\sym{s}(Y))\).


Note that
our argument filtering method never destroys the well-typedness,
which is easily proved by induction on terms.

\begin{theorem}\label{th:af-welltyped}
For any argument filtering $\pi$ and term $t \in \T$,
$\pi(t)$ is well-typed under the typing function $\type_\pi$
and $\type_\pi(\pi(t))=\type(t)$.
\end{theorem}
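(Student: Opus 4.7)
The plan is to prove this by structural induction on $t$. Writing $t \equiv \lambda \overline{x_m}. a(\overline{t_n})$ with $a \in \Sigma \cup \V$, the inductive hypothesis is that $\pi(t_i)$ is well-typed under $\type_\pi$ with $\type_\pi(\pi(t_i)) = \type(t_i)$ for every $i$. Since variables are left unchanged by $\type_\pi$, the bound variables $\overline{x_m}$ will contribute the outer $\type(x_1) \to \cdots \to \type(x_m) \to \beta$ portion of the resulting type in all cases, so the work is really to show that the head application has the correct base type $\beta$ after filtering.

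First I would dispatch the easy case $a \in \V$: by definition $\type_\pi(a) = \type(a)$, and by the inductive hypothesis each $\pi(t_i)$ has the original type $\type(t_i) = \alpha_i$, so $a(\pi(t_1),\ldots,\pi(t_n))$ is well-typed of base type $\beta$ under $\type_\pi$, hence $\pi(t)$ has type $\type(t)$. Next, for the list case $a \in \Sigma$ with $\pi(a) = [i_1,\ldots,i_k]$, the definition gives $\type_\pi(a) = \alpha_{i_1} \to \cdots \to \alpha_{i_k} \to \beta$. By the inductive hypothesis $\pi(t_{i_j})$ has type $\alpha_{i_j}$ under $\type_\pi$, so the filtered application is again well-typed of base type $\beta$.

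The projection case $a \in \Sigma$ with $\pi(a) = i$ is the only one requiring a genuine observation. Here $\pi(t) \equiv \lambda \overline{x_m}. \pi(t_i)$, and the well-typedness of the result depends on $\pi(t_i)$ ending up at type $\beta$ rather than some higher-order type; this is exactly guaranteed by the side-condition $\alpha_i = \beta$ built into the definition of an argument filtering function, combined with the inductive hypothesis $\type_\pi(\pi(t_i)) = \type(t_i) = \alpha_i = \beta$. The residual (mostly cosmetic) point to verify is that re-abstracting over $\overline{x_m}$ — which may no longer appear free after projection — still yields a well-typed term of the expected function type; this holds because well-typedness is preserved under vacuous abstraction and $\type_\pi$ agrees with $\type$ on variables.

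I expect the main obstacle to be purely presentational rather than mathematical: the projection clause silently presupposes $\alpha_i = \beta$, so I would make this side condition explicit at the start of the proof and be careful that in the inductive step $t_i$ is a term of $\T$ (i.e., $\eta$-long $\beta$-normal) so that the inductive hypothesis applies unchanged. No difficulty arises from $\pi$ possibly producing non-$\eta$-long terms, since the theorem only claims well-typedness, not membership in $\T$.
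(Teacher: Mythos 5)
Your proof is correct and follows the same route the paper indicates: the paper gives no detailed argument, stating only that the result ``is easily proved by induction on terms,'' and your structural induction with the three cases of the definition of $\pi$ (in particular isolating the side condition $\alpha_i=\beta$ as the crucial point in the projection case) is exactly the intended argument, merely written out in full.
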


In the following,
we prove the soundness of the argument filtering method
as a generating method of weak reduction orders.
To this end, we first prove a lemma required for showing that
$>_\pi$ and $\gsim_\pi$ are closed under substitution.


\begin{lemma} \label{lm:af}
$\pi(t\theta\da) \equiv \pi(t)\theta_\pi\da$.
\end{lemma}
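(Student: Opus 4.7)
The plan is to proceed by structural induction on $t$, written in its canonical form $t \equiv \lambda \ol{x_m}. a(\ol{t_n})$ with bound variables chosen fresh with respect to $\theta$. The prefix $\lambda \ol{x_m}$ and the proper arguments $\ol{t_n}$ are dispatched by the induction hypothesis, so everything reduces to understanding what happens at the head symbol $a$. When $a \in \Sigma$, no $\beta$-redex is created by the substitution, so $t\theta\da \equiv \lambda \ol{x_m}. a(t_1\theta\da, \ldots, t_n\theta\da)$; one then splits on whether $\pi(a)$ is a projection index or an arity list, unfolds the definition of $\pi$ on both sides, and concludes by the induction hypothesis on each $t_i$. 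The case $a \in \V \setminus dom(\theta)$ is entirely analogous after the small side-observation that $\pi(X\da) \equiv X\da$, so $\theta_\pi(X) \equiv X\da$ whenever $\theta(X) \equiv X\da$.

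The main obstacle is the case $a \equiv X \in dom(\theta)$, where $\theta(X) \equiv \lambda \ol{y_n}. u$ with $u$ of base type. Then $t\theta$ has a head $\beta$-redex and normalizes to $t\theta\da \equiv \lambda \ol{x_m}. (u[\ol{y_n} \mapsto \ol{t_n\theta\da}])\da$. On the filtered side, $\theta_\pi(X) \equiv \pi(\theta(X)) \equiv \lambda \ol{y_n}. \pi(u)$, so $\pi(t)\theta_\pi$ likewise contains a head redex and, using the induction hypothesis on each $t_i$, normalizes to $\lambda \ol{x_m}. (\pi(u)[\ol{y_n} \mapsto \pi(t_i\theta\da)])\da$. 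Matching the two sides therefore reduces to proving $\pi(u[\ol{y_n} \mapsto \ol{t_n\theta\da}]\da) \equiv \pi(u)[\ol{y_n} \mapsto \pi(t_i\theta\da)]\da$, which is the very statement of the lemma applied to $u$ and the derived substitution $y_i \mapsto t_i\theta\da$. Since $u$ comes from $\theta(X)$ and is not a subterm of $t$, a naive structural induction on $t$ does not reach it.

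To close this gap, I would introduce an auxiliary extension $\tilde\pi$ of $\pi$ to all well-typed preterms, defined as a term homomorphism fixing variables: $\tilde\pi(x) \equiv x$, $\tilde\pi(\lambda x.s) \equiv \lambda x. \tilde\pi(s)$, $\tilde\pi(st) \equiv \tilde\pi(s)\tilde\pi(t)$, and for $f \in \Sigma$, $\tilde\pi(f)$ is the $\eta$-expanded combinator $\lambda \ol{x_n}.x_i$ or $\lambda \ol{x_n}.f(x_{i_1},\ldots,x_{i_k})$ prescribed by $\pi(f)$ (well-typed thanks to Theorem \ref{th:af-welltyped}). Three easy facts then follow by direct induction: (a) $\tilde\pi(s[x \mapsto v]) \equiv \tilde\pi(s)[x \mapsto \tilde\pi(v)]$ because $\tilde\pi$ fixes variables; (b) $\tilde\pi$ preserves $\beta$- and $\eta$-conversion, hence commutes with $\da$ up to renormalization, i.e.\ $\tilde\pi(s)\da \equiv \tilde\pi(s\da)\da$; and (c) $\tilde\pi(t)\da \equiv \pi(t)$ for every $t \in \T$, since $\tilde\pi(f)$ applied to $n$ arguments $\beta$-reduces to exactly the filtered head. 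The lemma then collapses to the chain $\pi(t\theta\da) \equiv \tilde\pi(t\theta\da)\da \equiv \tilde\pi(t\theta)\da \equiv (\tilde\pi(t)\tilde\theta)\da \equiv (\pi(t)\theta_\pi)\da$, where $\tilde\theta(x) \equiv \tilde\pi(\theta(x))$ and (c) gives $\tilde\pi(t)\da \equiv \pi(t)$ together with $\tilde\pi(\theta(x))\da \equiv \theta_\pi(x)$. The chief technical subtlety is (c): verifying that the $\eta$-expansions baked into the combinators $\tilde\pi(f)$ mesh correctly with $\eta$-long $\beta$-normalization so that the resulting term matches the case-split definition of $\pi$ on $\T$.
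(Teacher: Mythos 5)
Your proposal is correct, but it takes a genuinely different route from the paper's. The paper proves the lemma by a direct induction on the preterm $t\theta$ ordered by ${\red{\beta}{}} \cup {\rhd_\sub}$: the very obstacle you identify --- that in the case $t \equiv X(\ol{t_n})$ with $X \in dom(\theta)$ and $X\theta \equiv \lambda\ol{y_n}.a(\ol{u_k})$ the body $a(\ol{u_k})$ is not a subterm of $t$ --- is resolved there not by an auxiliary translation but by strengthening the induction order, so that the $\beta$-reduct $a(\ol{u_k})\{\, y_i := t_i\theta\da \,\}$ of $t\theta$ lies below $t\theta$ and the induction hypothesis applies to it directly. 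Your alternative factors the statement through the homomorphic combinator translation $\tilde\pi$ and reduces everything to three commutation facts (with substitution, with $\beta\eta$-conversion, and agreement with $\pi$ on terms in $\T$); this is more modular, makes it transparent that argument filtering is essentially a signature morphism, and replaces the paper's five-way case analysis by reusable general lemmas. The price is exactly where you locate it, in fact (c): one must check that the $\eta$-long $\beta$-normal form of $\tilde\pi(t)$ coincides syntactically with the case-split definition of $\pi(t)$. This does go through --- $\pi(t)$ is $\beta$-normal and $\eta$-long under $\type_\pi$ (Theorem \ref{th:af-welltyped}) and $\beta$-convertible to $\tilde\pi(t)$, so uniqueness of $\eta$-long $\beta$-normal forms closes the argument --- and a plain structural induction on $t \in \T$ suffices for (c) since no substitution is involved there. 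The paper's proof is more self-contained, presupposing only the well-foundedness of ${\red{\beta}{}} \cup {\rhd_\sub}$ on simply-typed preterms; yours buys a cleaner conceptual picture at the cost of introducing and typing the auxiliary signature morphism.
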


\begin{proof}
\REVISED{%
We proceed by induction on preterm $t\theta$
ordered with ${\red{\beta}{}} \cup {\rhd_\sub}$.
\begin{itemize}
\item
  In case of $t \equiv \lambda x.u$:
  Since $t\theta \rhd_\sub u\theta$,
  we have $\pi(u\theta\da) \equiv \pi(u)\theta_\pi\da$
  from the induction hypothesis.
  Hence we have:
  \(\pi((\lambda x. u)\theta\da)
      \equiv \lambda x. \pi(u\theta\da)
      \equiv \lambda x. \pi(u)\theta_\pi\da
      \equiv \pi(\lambda x.u)\theta_\pi\da \).
\item
  In case of $t \equiv f(\ol{t_n})$, $f \in \Sigma$, and $\pi(f)=i$:
  Since $t\theta \rhd_\sub t_i\theta$,
  we have $\forall i.\,\pi(t_i\theta\da) \equiv \pi(t_i)\theta_\pi\da$
  from the induction hypothesis.
  Hence we have:
  \(\pi(f(\ol{t_n})\theta\da)
      \equiv \pi(f(\ol{t_n\theta\da}))
      \equiv \pi(t_i\theta\da)
      \equiv \pi(t_i)\theta_\pi\da
      \equiv \pi(f(\ol{t_n}))\theta_\pi\da\).
\item
  In case of $t \equiv f(\ol{t_n})$ , $f \in \Sigma$, and $\pi(f)$ is a list:
  Suppose that
  $t'_i \equiv \bot\da$ if $i \notin \pi(f)$;
  otherwise $t'_i \equiv \pi(t_i)$,
  and
  $t''_i \equiv \bot\da$ if $i \notin \pi(f)$;
  otherwise $t''_i \equiv \pi(t_i\theta\da)$.
  For each $i$,
  since $t\theta \rhd_\sub t_i\theta$,
  we have $\pi(t_i\theta\da) \equiv \pi(t_i)\theta_\pi\da$
  from the induction hypothesis.
  Then $t''_i \equiv t'_i\theta_\pi\da$ holds for each $i$.
  Hence we have:
  \(\pi(f(\ol{t_n})\theta\da)
    \equiv \pi(f(\ol{t_n\theta\da}))
    \equiv f(\ol{t''_n})
    \equiv f(\ol{t'_n\theta_\pi\da})
    \equiv f(\ol{t'_n})\theta_\pi\da
    \equiv \pi(f(\ol{t_n}))\theta_\pi\da\).
\item
  In case of $t \equiv X \in \V$:
  Obvious from the definition of $\theta_\pi$.
\item
  In case of $t \equiv X(\ol{t_n})$, $X \in \V$ and $n>0$:
  Since $\type(X) = \type(X\theta)$,
  we have $X\theta \equiv \lambda \ol{y_n}.a(\ol{u_k})$.
  For each $i$,
  since $t\theta \rhd_\sub t_i\theta$,
  we have $\pi(t_i\theta\da) \equiv \pi(t_i)\theta_\pi\da$
  from the induction hypothesis.
  Since
  \(t\theta
      \equiv (\lambda \ol{y_n}.a(\ol{u_k}))(\ol{t_n\theta})
      \red{\beta}{+}
        a(\ol{u_k}) \{\, y_i := t_i\theta\da \mid i \in \ol{n} \,\}\),
  we have
  \(\pi(a(\ol{u_k}) \{ y_i := t_i\theta\da \mid i \in \ol{n} \} \da)
    \equiv
      \pi(a(\ol{u_k})) \{ y_i := \pi(t_i\theta\da) \mid i \in \ol{n} \} \da\)
  from the induction hypothesis.
  Hence we have:
  \(\pi(X(\ol{t_n})\theta\da)
      \equiv
        \pi((\lambda \ol{y_n}.a(\ol{u_k}))(\ol{t_n\theta\da})\da)
      \equiv
        \pi(a(\ol{u_k}) \{\, y_i := t_i\theta\da \mid i \in \ol{n} \,\} \da)
      \equiv
        \pi(a(\ol{u_k})) \{\, y_i := \pi(t_i\theta\da) \mid i \in \ol{n} \,\} \da
      \equiv
        \pi(a(\ol{u_k})) \{\, y_i := \pi(t_i)\theta_\pi\da \mid i \in \ol{n} \,\} \da
      \equiv
        (\lambda \ol{y_n}.\pi(a(\ol{u_k}))) (\ol{\pi(t_n)\theta_\pi\da})\da
      \equiv
        \pi(\lambda \ol{y_n}.a(\ol{u_k}))(\ol{\pi(t_n)\theta_\pi\da})\da\\
      \equiv
        X(\ol{\pi(t_n)})\theta_\pi\da
      \equiv
        \pi(X(\ol{t_n}))\theta_\pi\da\).
\qed
\end{itemize}
}
\end{proof}

\REVISED{
Note that the corresponding lemma in STRSs
is $\pi(t\theta) \geq \pi(t)\theta_\pi$ where $>$ is a given binary relation \cite{KS09}.
This is the technical reason why the argument filtering method on STRSs
can apply to only left-firmness
(left-hand side variables occurs at leaf positions only)
STRSs\cite{K01,KS09}.
This difference originates the fact that
STRSs allow partial application (ex. $\sym{foldl}~F$, $\sym{foldl}~F~X$)
but HRSs does not.
}


\begin{theorem}
For any reduction order $>$ and argument filtering function $\pi$,
$\gsim_\pi$ is a weak reduction order.
\end{theorem}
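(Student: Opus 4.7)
The plan is to verify that the pair $(\gsim_\pi,\ \gsim_\pi \setminus \lsim_\pi)$ meets the three conditions of a reduction pair. I first observe that, since $>$ is (as is standard for a reduction order) transitive and irreflexive, $\gsim_\pi \setminus \lsim_\pi$ coincides exactly with $>_\pi$: from $s \gsim_\pi t$ and $t \not\gsim_\pi s$ we get $\pi(s) \not\equiv \pi(t)$, forcing $\pi(s) > \pi(t)$; conversely $\pi(s) > \pi(t)$ rules out both $\pi(t) > \pi(s)$ and $\pi(s) \equiv \pi(t)$. Well-foundedness of $>_\pi$ is then immediate: any infinite chain $s_0 >_\pi s_1 >_\pi \cdots$ would yield $\pi(s_0) > \pi(s_1) > \cdots$, contradicting well-foundedness of $>$.

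Closure of $\gsim_\pi$ (and of $>_\pi$) under substitutions is a direct payoff of Lemma~\ref{lm:af}. Given $s \gsim_\pi t$ and a substitution $\theta$, the lemma yields $\pi(s\theta\da) \equiv \pi(s)\theta_\pi\da$ and $\pi(t\theta\da) \equiv \pi(t)\theta_\pi\da$; closure of $>$ under the substitution $\theta_\pi$ preserves $\pi(s) > \pi(t)$, while $\equiv$ is preserved trivially, so $s\theta\da \gsim_\pi t\theta\da$ in either case.

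The main obstacle is closure of $\gsim_\pi$ under contexts. My plan is to reduce it to the substitution case by treating $\square_\alpha$ in $C[\,]$ as a fresh variable $X$ of type $\alpha$, so that $C[s] \equiv C[X]\{X \mapsto s\}\da$ and $C[t] \equiv C[X]\{X \mapsto t\}\da$. Lemma~\ref{lm:af} then gives $\pi(C[s]) \equiv \pi(C[X])\{X \mapsto \pi(s)\}\da$ and likewise for $C[t]$. If $X$ does not occur in $\pi(C[X])$ --- which happens exactly when the hole lies in an argument position removed by $\pi$ --- both normalized results coincide and hence $C[s] \gsim_\pi C[t]$. Otherwise, since filtering never duplicates subterms, $X$ occurs exactly once in $\pi(C[X])$; closure of $>$ under context (in the HRS sense, i.e.\ up to $\eta$-long $\beta$-normalization when $X$ sits at a head position) lifts $\pi(s) > \pi(t)$ to $\pi(C[X])\{X \mapsto \pi(s)\}\da > \pi(C[X])\{X \mapsto \pi(t)\}\da$, while preserving $\equiv$. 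The delicate point is this last lifting when substitution for $X$ creates fresh $\beta$-redexes, which is exactly the situation where Lemma~\ref{lm:af} is indispensable.

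Finally, the compatibility condition $\gsim_\pi \cdot {>_\pi} \subseteq {>_\pi}$ follows from transitivity of $>$: $\pi(s) > \pi(t)$ or $\pi(s) \equiv \pi(t)$, combined with $\pi(t) > \pi(u)$, gives $\pi(s) > \pi(u)$, so $s >_\pi u$. Putting these four steps together establishes that $(\gsim_\pi,\ >_\pi) = (\gsim_\pi,\ \gsim_\pi \setminus \lsim_\pi)$ is a reduction pair, which is exactly what is required for $\gsim_\pi$ to be a weak reduction order.
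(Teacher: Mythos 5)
Your proposal is correct and follows essentially the same route as the paper: the substantive content in both is the closure of $\gsim_\pi$ and $>_\pi$ under substitution via Lemma~\ref{lm:af}, with well-foundedness, compatibility, and context closure treated as routine consequences of the definitions. The only (harmless) deviation is that you derive context closure by replacing the hole with a fresh variable and reusing Lemma~\ref{lm:af}, where the paper does a direct induction on $C[\,]$; you also make explicit the standing assumption that the reduction order $>$ is transitive, which the paper's definition omits but its ``remaining properties are routine'' step tacitly uses for ${\gsim_\pi\cdot>_\pi}\subseteq{>_\pi}$.
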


\begin{proof}
It is easily shown that
$s \gsim_\pi t \imply C[s] \gsim_\pi C[t]$
by induction on $C[\,]$.
From Lemma \ref{lm:af},
we have
\(s \gsim_\pi t
  \imply \pi(s) \geq \pi(t)
  \imply \pi(s)\theta_\pi\da \geq \pi(t)\theta_\pi\da
  \imply \pi(s\theta\da) \geq \pi(t\theta\da)
  \imply s\theta\da \gsim_\pi t\theta\da\),
and
\(s >_\pi t
  \imply \pi(s) > \pi(t)
  \imply \pi(s)\theta_\pi\da > \pi(t)\theta_\pi\da
  \imply \pi(s\theta\da) > \pi(t\theta\da)
  \imply s\theta\da >_\pi t\theta\da\).
Remaining properties are routine.
\qed
\end{proof}


~\begin{example} \label{ex:AF}
Consider the PFP-HRS $R_{\sym{ave}}$ in Example \ref{ex:SDP}.
Every static recursion component except
\(\{\sym{div}^\sharp(\sym{s}(X),\sym{s}(Y)) \to \sym{div}^\sharp(\sym{sub}(X,Y),\sym{s}(Y))\}\)
is non-looping (cf. Example \ref{ex:subterm}).
We can prove its non-loopingness with the argument filtering method,
by taking $\pi(\sym{sub})=\pi(\sym{div^\sharp})=[1]$,
and the normal higher-order reduction ordering $>_{rhorpo}^n$,
written $(>_{rhorpo})_n$ in \Cite{JR06} defined by:
\begin{itemize}
\item a neutralization level ${\cal L}_f^j=0$ for all symbol
  $f\in\Sigma$ and argument position $j$ (in fact, these parameters
  are relevant for functional arguments only),
\item filtering out all arguments (a notion introduced in \Cite{JR06}
  not to be confused with the argument filtering method) by taking
  ${\cal A}_f^j=\emptyset$ for all $f$ and $j$ (again, these
  parameters are relevant for functional arguments only),
\item a precedence $s_{new} >_{\Sigma_{new}} sub_{new}$ (a symbol
  $f_{new}$ with $f\in\Sigma$ is a new symbol introduced by the
  definition of $>_{rhorpo}^n$ in \Cite{JR06}, with the same type as
  $f$ since neutralization levels are null),
\item a multiset (or lexicographic) status for $\sym{div}^\sharp_{new}$,
\item a quasi-ordering on types reduced to the equality (the strict
  part is well-founded since it is empty, and equality preserves
  functional types).
\end{itemize}

Then we have
\(\pi(\sym{div}^\sharp(\sym{s}(X),\sym{s}(Y)))
    \equiv   \sym{div}^\sharp(\sym{s}(X))
    \,>_{rhorpo}^n\, \sym{div}^\sharp(\sym{sub}(X))
    \equiv   \pi(\sym{div}^\sharp(\sym{sub}(X,Y),\sym{s}(Y)))\),
and ${R_\sym{div}} \subseteq {(\ge_{rhorpo}^n)_\pi}$. For instance,
\(\sym{div}^\sharp(\sym{s}(X))
    \,{>_{rhorpo}^n}\\\sym{div}^\sharp(\sym{sub}(X))\)
since \REVISED{$FN$}\((\sym{div}^\sharp(\sym{s}(X)))\da_\beta
    \,>_{rhorpo}\,\) \REVISED{$FN$}\((\sym{div}^\sharp(\sym{sub}(X)))\da_\beta\)
and, because ${\cal L}_f^j=0$ and ${\cal A}_f^j=\emptyset$,
\REVISED{$FN$}$(f t_1\ldots t_n) = f_{new}$ \REVISED{$FN$}$(t_1)\ldots$ \REVISED{$FN$}$(t_n)$.
From Proposition \ref{pr:SDP2},
the static recursion component for $\sym{div}$ is non-looping,
and $R_\sym{div}$ is terminating.
\end{example}

\section{Usable Rules}

In order to reduce the number of constraints
required for showing the non-loopingness of a component,
the notion of usable rules is widely used.
This notion was introduced on TRSs \cite{GTSF06,HM07}
and then extended to STRSs \cite{SKSSN07,KS09}.
In this section, we extend it to HRSs.
\comment{
In this extension,
we improve the definition for ignoring usability of rules,
through higher-order variables in left hand sides of rules,
which is required in STRSs \cite{SKSSN07}.}


To illustrate the interest of this notion, we start with some example.

\begin{example}\label{ex:usable1}
We consider the data type
$\sym{heap} ::= \sym{leaf} \mid \sym{node}(\sym{nat}, \sym{heap}, \sym{heap})$
and the PFP-HRS $R_{\sym{heap}}$ defined by the following rules:

\REVISED{%
\[\left\{\begin{array}{rcl}
  \sym{add}(0,Y)
    &\to& Y\\
  \sym{add}(\sym{s}(X),Y)
    &\to& \sym{s}(\sym{add}(X,Y)) \\
  \sym{map}(\lambda x. F(x), \sym{nil})
    &\to& \sym{nil}\\
  \sym{map}(\lambda x. F(x), \sym{cons}(X, L))
    &\to& \sym{cons}(F(X), \sym{map}(\lambda x. F(x). L)) \\
  \sym{merge}(H, \sym{leaf})
    &\to& H\\
  \sym{merge}(\sym{leaf}, H)
    &\to& H \\
  &&\hspace*{-160pt}
  \sym{merge}(\sym{node}(X_1, H_{11}, H_{12}), \sym{node}(X_2, H_{21}, H_{22}))
    \\&&\hspace*{-50pt}
    \to \sym{node}(X_1, H_{11}, \sym{merge}(H_{12}, \sym{node}(X_2, H_{21}, H_{22}))\\
  &&\hspace*{-160pt}
  \sym{merge}(\sym{node}(X_1, H_{11}, H_{12}), \sym{node}(X_2, H_{21}, H_{22}))
    \\&&\hspace*{-50pt}
    \to \sym{node}(X_2, \sym{merge}(\sym{node}(X_1, H_{11}, H_{12}), H_{21}), H_{22}) \\
  \sym{foldT}(\lambda xyz. F(x,y,z), X, \sym{leaf})
    &\to& X \\
  &&\hspace*{-160pt}
  \sym{foldT}(\lambda xyz. F(x,y,z), X, \sym{node}(Y, H_1, H_2))
    \\&&\hspace*{-130pt}
    \to F(X, \sym{foldT}(\lambda xyz. F(x,y,z), X, H_1),
             \sym{foldT}(\lambda xyz. F(x,y,z), X, H_2)) \\
  \sym{sumT}(H)
    &\to& \sym{foldT}(\lambda xyz. \sym{add}(x, \sym{add}(y, z)), 0, H) \\
  \sym{hd}(\sym{nil})
    &\to& \sym{leaf}\\
  \sym{hd}(\sym{cons}(X, L))
    &\to& X \\
  \sym{l2t}(\sym{nil})
    &\to& \sym{nil}\\
  \sym{l2t}(\sym{cons}(H, \sym{nil}))
    &\to& \sym{cons}(H,\sym{nil}) \\
  \sym{l2t}(\sym{cons}(H_1, \sym{cons}(H_2, L)))
    &\to& \sym{l2t}(\sym{cons}(\sym{merge}(H_1, H_2), \sym{l2t}(L))) \\
  \sym{list2heap}(L)
    &\to& \sym{hd}(\sym{l2t}(\sym{map}(\lambda x. \sym{node}(x, \sym{leaf}, \sym{leaf}), L
)))
\end{array}\right.\]
}
The static recursion components for $\sym{foldT}$ consists of
\[\{\sym{foldT}^\sharp(\lambda xyz. F(x,y,z), X, \sym{node}(Y, H_1, H_2))
    \to \sym{foldT}(\lambda xyz. F(x,y,z), X, H_i)\}\]
for $i=1,2$, and their union.
By taking $\pi(\sym{foldT}) = 3$,
these components satisfy the subterm criterion.
The static recursion components for $\sym{add}$, $\sym{map}$ and $\sym{merge}$
also satisfy the subterm criterion.
Hence it suffices to show that the following three static recursion components
for $\sym{l2t}$ are non-looping:
\[\begin{array}{l}
  \left\{\begin{array}{l}
    \sym{l2t}^\sharp(\sym{cons}(H_1, \sym{cons}(H_2, L)))
      \to \sym{l2t}^\sharp(\sym{cons}(\sym{merge}(H_1, H_2), \sym{l2t}(L)))
      \hspace*{2pt}\cdots (1)
  \end{array}\right\}\\
  \left\{\begin{array}{l}
    \sym{l2t}^\sharp(\sym{cons}(H_1, \sym{cons}(H_2, L)))
      \to \sym{l2t}^\sharp(L)
      \hspace*{2pt}\cdots (2)
  \end{array}\right\}\\
  \left\{ (1), (2) \right\}
\end{array}\]
The component $\{(2)\}$ satisfies the subterm criterion.
By taking $\pi(\sym{cons}) = [2]$ and $\pi(\sym{l2t}) = \pi(\sym{l2t}^\sharp) = 1$,
we can orient the static dependency pairs $(1)$ and $(2)$
by using the normal higher-order recursive path ordering \cite{JR06}:
\[\begin{array}{l}
  \pi(\sym{l2t}^\sharp(\sym{cons}(H_1, \sym{cons}(H_2, L))))
    \\\hspace*{20pt}
    \equiv \sym{cons}(\sym{cons}(L))
    >_{rhorpo}^n \sym{cons}(L)
    \equiv \pi(\sym{l2t}^\sharp(\sym{cons}(\sym{merge}(H_1, H_2), \sym{l2t}(L)))) \\
  \pi(\sym{l2t}^\sharp(\sym{cons}(H_1, \sym{cons}(H_2, L))))
    \equiv \sym{cons}(\sym{cons}(L))
    >_{rhorpo}^n L \equiv \pi(\sym{l2t}^\sharp(L))
\end{array}\]
However, in contrast to Example \ref{ex:AF},
the non-loopingness of $\{(1)\}$ and $\{(1), (2)\}$ cannot be shown
with the previous techniques.
Indeed, we cannot solve the constraint ${R_{\sym{heap}}} \subseteq {\gsim}$.
More precisely,
we cannot orient the rule for $\sym{hd}$,
because
$\pi(\sym{hd}(\sym{cons}(X,L)))\equiv \sym{hd}(\sym{cons}(L))$
does not contain the variable $X$ occurring in the right-hand side.
\end{example}


The notion of usable rule solves this problem,
that is,
it allows us to ignore the rewrite rule for $\sym{hd}$
for showing the non-loopingness of $\sym{l2t}$.

\begin{definition}[Usable rules]
We denote $f >_{\sym{def}} g$
if $g$ is a defined symbol and there is some $l \to r \in R$
such that $top(l) = f$ and $g$ occurs in $r$.

We define the set $\U(t)$ of usable rules of a term $t$ as follows.
If, for every $X(\ol{t_n}) \in Sub(t)$,
$\ol{t_n}$ are distinct bound variables, then
\(\U(t) =
    \{ l \to r \in R \mid
      f >_{\sym{def}}^* top(l)
      \mbox{ for some $f \in \D_R$ occurs in $t$}\}\).
Otherwise, $\U(t)=R$.
The usable rules of a static recursion component $C$
is
\(\U(C) = \bigcup\{ \U(v^\sharp) \mid u^\sharp \to v^\sharp \in C\} \).

For each $\alpha \in \B$,
we associate the new function symbols $\bot_\alpha$ and $\sym{c}_\alpha$
with $\type(\bot_\alpha) = \alpha$
and $\type(\sym{c}_\alpha) = \alpha \to \alpha \to \alpha$.
We define the HRS $C_e$ as 
\(C_e = \{ \sym{c}_\alpha(x_1,x_2) \to x_i \mid \alpha \in \B ,\, i = 1,2 \}\).
\end{definition}

Hereafter we omit the index $\alpha$ whenever no confusion arises.

When we show the non-loopingness of a static recursion component
using a reduction pair,
Proposition \ref{pr:SDP2} requires showing that $R \subseteq {\gsim}$.
The non-loopingness is not guaranteed by simply replacing $R$ with $\U(C)$.
We can supplement the gap with the HRS $C_e$.

\begin{theorem}\label{th:ur}
Let $R$ be a finitely-branching PFP-HRS.
Then $C \in SRC(R)$ is non-looping
if there exists a reduction pair $(\gsim,>)$ such that
  $\U(C) \cup C_e \subseteq {\gsim}$,
  $C \subseteq {\gsim} \cup {>}$, and
  $C \cap {>} \neq \emptyset$.
\end{theorem}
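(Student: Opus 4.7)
The approach is to adapt the standard usable-rules argument from first-order rewriting \cite{GTSF06,HM07} to the higher-order setting by interpreting terms via a map $I : \T \to \T_{\Sigma \cup \{\sym{c},\bot\}}$ that uses $C_e$ to mask non-usable reductions. I would proceed by contradiction: suppose an infinite static dependency chain $u_0^\sharp \to v_0^\sharp, u_1^\sharp \to v_1^\sharp, \ldots$ exists using only pairs in $C$, with some pair from $C \cap {>}$ occurring infinitely often, witnessed by substitutions $\theta_i$ with $u_i\theta_i\da, v_i\theta_i\da \in \T^{args}_{SC}(R)$ and $v_i^\sharp\theta_i\da \red{R}{*} u_{i+1}^\sharp\theta_{i+1}\da$. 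Strong computability implies strong normalization, and, combined with $R$ being finitely branching, this yields finitely many reducts at every relevant position, which is precisely what is needed to define $I$.

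The core construction is $I(t)$, defined by recursion on $t$: $I$ commutes with $\lambda$-abstraction and with free-variable applications, and for $t = f(\ol{t_n})$ with $f \in \Sigma$, $I(t) = f(\ol{I(t_n)})$ when every root-rewrite rule reachable from $t$ lies in $\U(C)$, otherwise $I(t) = \sym{c}(f(\ol{I(t_n)}), \sym{c}(I(s_1), \sym{c}(\ldots, \sym{c}(I(s_k), \bot))))$ where $\{s_1, \ldots, s_k\}$ enumerates the root reducts of $t$ by non-usable rules. The clause in the definition of usable rules that forces $\U(t)=R$ whenever some $X(\ol{t_n}) \in Sub(t)$ has $\ol{t_n}$ not all distinct bound variables is essential here: an instantiation of such a free-variable application can synthesise arbitrary redexes, so the interpretation must pessimistically treat those positions as non-usable.

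The key lemmas to establish, by induction on terms together with Proposition \ref{prop:red-subst} to handle the interaction between substitution and $\eta$-long $\beta$-normalisation, are: (a) if $s \red{R}{} t$ using a rule in $\U(C)$, then $I(s) \gsim I(t)$, because the rule is $\gsim$-oriented and $\gsim$ is closed under context and substitution; (b) if $s \red{R}{} t$ using a rule not in $\U(C)$, then $I(s) \red{C_e}{+} I(t)$ along the enumerated $\sym{c}$-tree, hence $I(s) \gsim I(t)$ since $C_e \subseteq {\gsim}$; and (c) for each pair $u^\sharp \to v^\sharp \in C$ and admissible $\theta$, $I(u^\sharp\theta\da) \gsim I(v^\sharp\theta\da)$, strictly $>$ when the pair is in $C \cap {>}$. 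Concatenating (a)--(b) along each reduction $v_i^\sharp\theta_i\da \red{R}{*} u_{i+1}^\sharp\theta_{i+1}\da$ and applying (c) at every chain index produces an infinite $(\gsim \cup {>})$-sequence with infinitely many strict steps; the compatibility axiom ${\gsim \cdot >} \subseteq {>}$ then collapses it to an infinite $>$-descent, contradicting the well-foundedness of $>$.

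The main obstacle lies in establishing (b) cleanly when the non-usable redex sits inside a subterm whose top symbol has already been expanded into a $\sym{c}$-tree, or under a $\lambda$-abstraction that is rewritten inside: the interpretation must simulate the rewrite by the correct sequence of $C_e$-projections in the correct branch while preserving both the $\eta$-long $\beta$-normal form and the typing. I would also need to justify that $I$ is well-defined, since a priori the reduct set of a subterm can be infinite; but the terms on which $I$ is applied along the chain are subterms of $u_i^\sharp\theta_i\da$ or $v_i^\sharp\theta_i\da$, whose arguments are strongly normalising, so König's lemma applied to their finitely-branching reduction tree yields the required finiteness.
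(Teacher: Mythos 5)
Your overall strategy coincides with the paper's: the same interpretation $I$ (masking non-usable root reductions by a $\sym{c}$-tree over the set of reducts, well-defined because termination plus finite branching makes that set finite), the same role for $C_e$, and the same final descent contradicting the well-foundedness of $>$. The gap is in how you justify steps (a) and (c). From $l\gsim r$ and closure of $\gsim$ under substitution you only obtain $l\sigma\da\gsim r\sigma\da$ for an actual substitution $\sigma$; to conclude $I(l\theta\da)\gsim I(r\theta\da)$ you must exhibit such a $\sigma$ and relate $I(l\theta\da)$ and $I(r\theta\da)$ to $l\sigma\da$ and $r\sigma\da$. You never introduce the interpreted substitution $\theta^I$ (defined by $\theta^I(x)\equiv I(\theta(x))$), and without it ``closed under substitution'' has nothing to act on: $I(l\theta\da)$ is not an instance of $l$, because $I$ inserts $\sym{c}$-wrappers deep inside the terms contributed by $\theta$, and these line up with a substitution applied to $l$ only up to $C_e$-reduction.

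That commutation is the technical heart of the paper's proof and is exactly what your proposal omits. Lemma \ref{lm:ur1} establishes $I(t\theta\da)\red{C_e}{*}I(t)\theta^I\da\red{C_e}{*}t\theta^I\da$ --- with the deliberately strengthened middle term, and an induction on the multiset of types of $dom(\theta)$ lexicographically combined with ${\rhd_\sub}\cup{\red{R}{}}$, both forced by the higher-order case $t\equiv X(\ol{t_n})$ where $\beta$-reduction creates a fresh substitution on lower-typed variables. Lemma \ref{lm:ur2} establishes the \emph{exact} identity $I(r\theta\da)\equiv r\theta^I\da$ for right-hand sides of rules in $C\cup\U(C)$. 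The asymmetry matters: on left-hand sides a one-way $\red{C_e}{*}$ suffices since it points in the $\gsim$ direction, but on right-hand sides $\red{C_e}{*}$ would point the wrong way, so an identity is needed; this is precisely where the distinct-bound-variable condition on $X(\ol{t_n})$ in the definition of usable rules is used (it makes the inner $\beta$-substitution a variable permutation, with which $I$ commutes exactly by Lemma \ref{lm:permutation}) --- not for the ``pessimistic masking'' rationale you give. By contrast, the obstacle you single out, routing a non-usable root step through the $\sym{c}$-tree in (b), is the easy part: a single $C_e$-projection discards the $f(\ol{I(t_n)})$ branch and further projections select the appropriate reduct. As written, your justification ``because the rule is $\gsim$-oriented and $\gsim$ is closed under context and substitution'' is a non sequitur until the two commutation lemmas are stated and proved.
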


The proof of this theorem will be given at the end of this section.


\begin{example}\label{ex:usable2}
We show the termination of the PFP-HRS $R_{\sym{heap}}$
in Example \ref{ex:usable1}.
We have to show the non-loopingness of the components $\{(1)\}$ and $\{(1), (2)\}$.
To this end, it suffices to show that
the constraint $\U(\{(1), (2)\}) \cup C_e \subseteq {\gsim}$
can be solved
(instead of $R_\sym{heap} \subseteq {\gsim}$).
The usable rules of $\{(1), (2)\}$ are:
\[\left\{\begin{array}{rcl}
  \sym{merge}(H, \sym{leaf})
    &\to& H \\
  \sym{merge}(\sym{leaf}, H)
    &\to& H \\
  &&\hspace*{-133pt}
  \sym{merge}(\sym{node}(X_1, H_{11}, H_{12}), \sym{node}(X_2, H_{21}, H_{22}))
    \\&&\hspace*{-50pt}
    \to \sym{node}(X_1, H_{11}, \sym{merge}(H_{12}, \sym{node}(X_2, H_{21}, H_{22}))\\
  &&\hspace*{-133pt}
  \sym{merge}(\sym{node}(X_1, H_{11}, H_{12}), \sym{node}(X_2, H_{21}, H_{22}))
    \\&&\hspace*{-50pt}
    \to \sym{node}(X_2, \sym{merge}(\sym{node}(X_1, H_{11}, H_{12}), H_{21}), H_{22}) \\
  \sym{l2t}(\sym{nil})
    &\to& \sym{nil} \\
  \sym{l2t}(\sym{cons}(H, \sym{nil}))
    &\to& \sym{cons}(H,\sym{nil}) \\
  \sym{l2t}(\sym{cons}(H_1, \sym{cons}(H_2, L)))
    &\to& \sym{l2t}(\sym{cons}(\sym{merge}(H_1, H_2), \sym{l2t}(L)))
\end{array}\right.\]
The weak reduction order $(>_{rhorpo}^n)_\pi$ orient the rules.
Since $C_e \subseteq (>_{rhorpo}^n)_\pi$,
we conclude that $R_{\sym{heap}}$ is terminating.
\end{example}


In the rest of this section,
we present a proof of Theorem \ref{th:ur}.
We assume that $R$ is a finitely-branching PFP-HRS,
$C$ is a static recursion component of $R$,
and $\Delta = \{ top(l) \mid l \to r \in R \setminus \U(C) \}$.

The key idea of the proof is to use the following interpretation $I$.

Thanks to the Well-ordering theorem,
we assume that every non-empty set of terms $T$ has 
a least element $\least(T)$.

\begin{definition}
For a terminating term $t \in \T_\alpha$, $I(t)$ is defined as follows:
\[I(t) \equiv \left\{\begin{array}{ll}
    \lambda x. I(t')
      & \mbox{if $t \equiv \lambda x. t'$} \\
    a(\ol{I(t_n)})
      & \mbox{if $t \equiv a(\ol{t_n})$ and $a \notin \Delta$} \\
    \sym{c}_\alpha(a(\ol{I(t_n)}), Red_\alpha(\{ I(t') \mid t \red{R\setminus \U(C)}{} t' \}))
      & \mbox{if $t \equiv a(\ol{t_n})$ and $a \in \Delta$}
\end{array}\right.\]
Here, for each $\alpha \in \B$,
$Red_\alpha(T)$ is defined as 
$\bot_\alpha$ if $T = \emptyset$;
otherwise
$\sym{c}_\alpha(u, Red_\alpha(T \setminus \{ u \} ))$
where $u \equiv \least(T)$.
We also define $\theta^I$ by $\theta^I(x) \equiv I(\theta(x))$
for a terminating substitution $\theta$.
\end{definition}

The interpretation $I$ is inductively defined on terminating terms
with respect to ${\rhd_\sub} \cup {\red{R}{}}$,
which is well-founded on terminating terms.
Moreover, the set $\{ I(t') \mid t \red{R}{} t' \}$ is finite because R is finitely branching.
Hence, the above definition of $I$ is well-defined.
As for argument filterings (Theorem \ref{th:af-welltyped}),
this interpretation never destroys well-typedness.

\begin{theorem}
For any terminating $t$,
$I(t)$ is well-typed and $\type(I(t))=\type(t)$.
\end{theorem}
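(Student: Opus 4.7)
The plan is to prove the statement by well-founded induction on the terminating term $t$ with respect to ${\rhd_\sub}\cup{\red{R}{}}$, exactly the order on which $I$ itself is defined. The three clauses of the definition of $I$ give the three cases.

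First, if $t\equiv\lambda x.t'$, then $\type(t)=\type(x)\to\type(t')$. Since $t\rhd_\sub t'$ and $t'$ is terminating (Lemma \ref{lem-stable}(3)), the induction hypothesis gives that $I(t')$ is well-typed with $\type(I(t'))=\type(t')$; hence $I(t)=\lambda x.I(t')$ is well-typed with type $\type(x)\to\type(t')=\type(t)$. Second, if $t\equiv a(\ol{t_n})$ with $a\notin\Delta$, then $t_i\unlhd_\sub t$, each $t_i$ is terminating, and by IH each $I(t_i)$ is well-typed of type $\type(t_i)$; so $a(\ol{I(t_n)})$ is well-typed of type $\type(t)$.

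The main case is $t\equiv a(\ol{t_n})$ with $a\in\Delta$. Since terms are $\eta$-long and $a\in\Delta$ is the top of a left-hand side (so $a$ is fully applied wherever it occurs as a top symbol), we have $\type(t)=\alpha\in\B$. As in the previous case, $a(\ol{I(t_n)})$ is well-typed of type $\alpha$. For the second component, I would establish an auxiliary fact: if $T$ is a finite set of terms all of type $\alpha\in\B$, then $Red_\alpha(T)$ is well-typed of type $\alpha$. This follows by induction on $|T|$: the base case gives $\bot_\alpha:\alpha$, and in the inductive step $\sym{c}_\alpha$ has type $\alpha\to\alpha\to\alpha$ and both arguments have type $\alpha$ (the second by IH on $T\setminus\{u\}$). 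To apply this to $T=\{I(t')\mid t\red{R\setminus\U(C)}{} t'\}$, note that $T$ is finite by the finite-branching assumption, each such $t'$ satisfies $t\red{R}{} t'$ so is smaller in the induction order and is terminating, and $\type(t')=\type(t)=\alpha$ because $\red{R}{}$ preserves types; hence each $I(t')$ is well-typed of type $\alpha$ by the outer IH. Finally, $\sym{c}_\alpha(a(\ol{I(t_n)}),Red_\alpha(T))$ is well-typed of type $\alpha=\type(t)$.

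The only delicate point is making sure that each recursive invocation of $I$ appeals to a strictly smaller term in the order ${\rhd_\sub}\cup{\red{R}{}}$ used to define $I$, and that $\alpha=\type(t)\in\B$ in the third case so that $\sym{c}_\alpha$ and $\bot_\alpha$ are applicable; both follow from the $\eta$-long convention and from the fact that $\Delta$-symbols, being tops of left-hand sides of base type, are always fully applied. Everything else is a routine type-checking calculation, so I do not anticipate a substantive obstacle.
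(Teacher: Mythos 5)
Your proof is correct and follows exactly the route the paper intends: well-founded induction on $t$ ordered by ${\rhd_\sub}\cup{\red{R}{}}$ (the paper's own proof is just this one-line remark, which you have fleshed out with the case analysis, the auxiliary fact that $Red_\alpha$ of a finite set of terms of type $\alpha$ has type $\alpha$, and the observations that reducts are smaller in the order, type-preserved, and finitely many by finite branching). No gaps.
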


\begin{proof}
It can be easily proved by induction on $t$
ordered by ${\rhd_\sub} \cup {\red{R}{}}$.
\qed
\end{proof}


\begin{lemma}\label{lm:ur1}
Let $t$ be a term and $\theta$ be a substitution such that $t\theta\da$ is terminating.
Then, $I(t\theta\da) \red{C_e}{*} I(t)\theta^I\da \red{C_e}{*} t\theta^I\da$.
\end{lemma}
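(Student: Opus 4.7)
The plan is to prove both reductions simultaneously by well-founded induction on the preterm $t\theta$ ordered by ${\red{\beta}{}} \cup {\rhd_\sub}$, following the same induction scheme used in the proof of Lemma \ref{lm:af}. The two statements to establish are (A) $I(t\theta\da) \red{C_e}{*} I(t)\theta^I\da$ and (B) $I(t)\theta^I\da \red{C_e}{*} t\theta^I\da$. They need to be treated together because the case where $t$ is headed by a free variable unfolds a $\beta$-reduction coming from $\theta$, and the IH must then be applied to the $\beta$-contracted body with an enlarged substitution, which may not be smaller in the $\rhd_\sub$ order alone.

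For (B), the interpretation $I(u)$ differs from $u$ only by outer $\sym{c}(\_,Red(\ldots))$ wrappers added at $\Delta$-headed positions, each of which is removed by $\sym{c}(x_1,x_2) \to x_1$; a case analysis on the head of $t$ combined with the IH on subterms (and on the $\beta$-contracted body in the variable-head case) then gives the reduction. For (A), the non-$\Delta$ cases (abstraction, variable head, and function symbol head outside $\Delta$) commute $I$ with the top constructor, so the result follows directly from the IH on the $t_i$ or on the contracted body. The critical case is $t \equiv f(\ol{t_n})$ with $f \in \Delta$, where both sides of (A) have the outer shape $\sym{c}(f(\ldots), Red(\ldots))$. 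Working under the context $\sym{c}(\_,Red(\ldots))$, the first component $f(\ol{I(t_n\theta\da)})$ reduces to $f(\ol{I(t_n)})\theta^I\da$ by the IH applied to each $t_i$. For the second component, I would match each element of the target $Red$-tower to an element reachable from the source: for each $u$ with $t \red{R\setminus\U(C)}{} u$, Proposition \ref{prop:red-subst} gives $t\theta\da \red{R\setminus\U(C)}{*} u\theta\da$, so $I(u\theta\da)$ lies (possibly several $\sym{c}$-layers deep) in the source tower, and the IH applied to $u$ yields $I(u\theta\da) \red{C_e}{*} I(u)\theta^I\da$.

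The main obstacle will be making this $\Delta$-headed case of (A) rigorous. The construction of $Red$ via $\least$ imposes a specific $\sym{c}$-tower shape which is generally not preserved by substitution followed by normalization: the sets $\{I(u) \mid t \red{R\setminus\U(C)}{} u\}$ and $\{I(u') \mid t\theta\da \red{R\setminus\U(C)}{} u'\}$ may be ordered very differently, and a single step from $t$ can correspond to multiple steps from $t\theta\da$ after $\eta$-long normalization. What makes the rearrangement go through is that $C_e$ provides both projections $\sym{c}(x_1,x_2) \to x_1$ and $\sym{c}(x_1,x_2) \to x_2$, so one can navigate arbitrary paths down the source tower and discard unwanted elements; combined with the observation that every element needed in the target is reachable from some position of the source via $C_e$-reductions and the IH, this allows the source tower to be reshaped into the target tower.
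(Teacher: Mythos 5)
There is a genuine gap, and it is in the induction measure. You propose to induct on the preterm $t\theta$ ordered by ${\red{\beta}{}} \cup {\rhd_\sub}$, as in Lemma \ref{lm:af}. That measure covers the abstraction, application and variable-head cases, but in the case $t \equiv f(\ol{t_n})$ with $f \in \Delta$ you must also invoke the induction hypothesis on the one-step reducts $u$ with $t \red{R\setminus\U(C)}{} u$ in order to turn the tower $Red(\{I(t'') \mid t\theta\da \red{}{} t''\})$ into $Red(\{I(u) \mid t \red{}{} u\})\theta^I\da$. The preterm $u\theta$ is neither a $\beta$-reduct nor a subterm of $t\theta$, so this recursive call is not licensed by your ordering. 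The paper's proof is built around exactly this difficulty: it inducts on the pair $(\{\type(x) \mid x \in dom(\theta)\}, t)$, ordered lexicographically by $\rhd_s^{mul}$ and then by ${\rhd_\sub} \cup {\red{R}{}}$ on terminating terms. The variable-application case is handled not by $\beta$-reduction of the preterm but by the strict decrease of the first component (the new substitution $\{y_i := t_i\theta\da\}$ has a domain whose types are strict subtypes of $\type(X)$), which frees the second component to include $\red{R}{}$ and thereby covers the $\Delta$-case. Simply adding $\red{R}{}$ to your preterm measure is not an obvious repair, since you would then have to argue well-foundedness of a mix of $\beta$-steps on preterms and $R$-steps on their normal forms; the lexicographic design avoids this.

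A secondary imprecision: you argue that for $t \red{R\setminus\U(C)}{} u$, Proposition \ref{prop:red-subst} gives $t\theta\da \red{}{*} u\theta\da$ and hence $I(u\theta\da)$ sits ``several $\sym{c}$-layers deep'' in the source tower. The tower $Red(\{I(t'') \mid t\theta\da \red{}{} t''\})$ enumerates only the \emph{one-step} reducts of $t\theta\da$; a multi-step reduct is reachable only by descending into the inner towers of the $I(t'')$, which would require yet another inductive argument along $\red{R}{}$. The paper keeps everything at the top level by matching each one-step reduct $u$ of $t$ with the one-step reduct $u\theta\da$ of $t\theta\da$ (closure of the rewrite relation under substitution), so that $I(u\theta\da)$ is literally an element of the source set and the reshaping of the tower needs only the two projections of $C_e$. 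Your overall decomposition into the two reductions (A) and (B), and the case analysis on the head symbol, otherwise match the paper's proof.
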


\begin{proof}
\REVISED{
We prove the claim by induction on
$(\{ \type(x) \mid x \in dom(\theta) \}, t)$
ordered by
the lexicographic combination of
the multiset extension $\rhd_s^{mul}$ of $\rhd_s$,
and ${\rhd_\sub} \cup {\red{R}{}}$.
\begin{itemize}
\item
  In case of $t \equiv \lambda x.t'$:
  Since $t \rhd_\sub t'$,
  we have
  $I(t'\theta\da) \red{C_e}{*} I(t')\theta^I\da \red{C_e}{*} t'\theta^I\da$
  from the induction hypothesis.
  Hence we have:
  \(I((\lambda x.t')\theta\da)
      \equiv I(\lambda x.t'\theta\da)
      \equiv \lambda x.I(t'\theta\da)
      \red{C_e}{*} \lambda x.I(t')\theta^I\da
      \equiv I(\lambda x.t')\theta^I\da\),
  and
  \(I(\lambda x.t')\theta^I\da
      \equiv \lambda x.I(t')\theta^I\da
      \red{C_e}{*} \lambda x.t'\theta^I\da
      \equiv (\lambda x.t')\theta^I\da\).
\item
  In case of $t \equiv a(\ol{t_n})$ and $a \notin \Delta \cup dom(\theta)$:
  For each $i$,
  since $t \rhd_\sub t_i$, 
  we have
  $I(t_i\theta\da) \red{C_e}{*} I(t_i)\theta^I\da \red{C_e}{*} t_i\theta^I\da$
  from the induction hypothesis.
  Hence we have:
  \(I(a(\ol{t_n})\theta\da)
      \equiv I(a(\ol{t_n\theta\da}))
      \equiv a(\ol{I(t_n\theta\da)})
      \red{C_e}{*} a(\ol{I(t_n)\theta^I\da})
      \equiv a(\ol{I(t_n)})\theta^I\da
      \equiv I(a(\ol{t_n}))\theta^I\da\),
  and
  \(I(a(\ol{t_n}))\theta^I\da
      \equiv a(\ol{I(t_n)\theta^I\da})
      \red{C_e}{*} a(\ol{t_n\theta^I\da})
      \equiv a(\ol{t_n})\theta^I\da\).
\item
  In case of $t \equiv X \in dom(\theta)$:
  Obvious from the definition of $\theta^I$.
\item
  In case of $t \equiv X(\ol{t_n})$, $X \in dom(\theta)$ and $n>0$:
  Thanks to the general assumption $\type(X) = \type(X\theta)$,
  we let $X\theta \equiv \lambda \ol{y_n}.a(\ol{u_k})$.
  Since
  $\type(X) = \alpha_1 \to \cdots \to \alpha_n \to \beta \rhd_s \alpha_i = \type(y_i)$
  for each $i$,
  we have
  \(I(a(\ol{u_k})\{ y_i := t_i\theta\da \mid i\in \ol{n} \}\da)
      \red{C_e}{*}
      I(a(\ol{u_k}))\{ y_i := I(t_i\theta\da) \mid i\in \ol{n} \}\da\)
  from the induction hypothesis.
  For each $i$,
  since $t \rhd_\sub t_i$, 
  we have
  $I(t_i\theta\da) \red{C_e}{*} I(t_i)\theta^I\da \red{C_e}{*} t_i\theta^I\da$
  from the induction hypothesis.
  Hence, by Theorem 3.9 in \Cite{MN98} (if $s \red{R}{*} t$ and $\theta
  \red{R}{*} \theta'$ then $s\theta\da \red{R}{*} t\theta'\da$),
  \comment{thanks to Proposition \ref{prop:red-subst},} we have:
  \(I(X(\ol{t_n})\theta\da)
      \equiv
        I((\lambda \ol{y_n}.a(\ol{u_k}))(\ol{t_n\theta\da})\da) 
      \equiv
        I(a(\ol{u_k}) \{\, y_i := t_i\theta\da \mid i\in \ol{n} \,\} \da) 
      \red{C_e}{*}
        I(a(\ol{u_k})) \{\, y_i := I(t_i\theta\da) \mid i\in \ol{n} \,\}\da 
      \red{C_e}{*}
        I(a(\ol{u_k}))\{\, y_i := I(t_i)\theta^I\da \mid i\in \ol{n} \,\}\da 
      \equiv\\ (\lambda \ol{y_n}.I(a(\ol{u_k})))(\ol{I(t_n)\theta^I\da})\da
      \equiv X(\ol{I(t_n)})\theta^I\da 
      \equiv I(X(\ol{t_n}))\theta^I\da\),
  and
  \(I(X(\ol{t_n}))\theta^I\da\\
      \equiv X(\ol{I(t_n)})\theta^I\da 
      \equiv (\lambda \ol{y_n}.I(a(\ol{u_k})))(\ol{I(t_n)\theta^I\da})\da 
      \equiv
        I(a(\ol{u_k})) \{\, y_i := I(t_i)\theta^I\da \mid i\in \ol{n} \,\} \da 
      \red{C_e}{*}
        I(a(\ol{u_k})) \{\, y_i := t_i\theta^I\da \mid i\in \ol{n} \,\} \da 
      \equiv \lambda \ol{y_n}.I(a(\ol{u_k}))(\ol{t_n\theta^I\da})\da 
      \equiv I(\lambda \ol{y_n}.a(\ol{u_k}))(\ol{t_n\theta^I\da})\da 
      \equiv X(\ol{t_n})\theta^I\da\).
\item
  In case of $t \equiv f(\ol{t_n})$ and $f \in \Delta$:
  For each $i$,
  since $t \rhd_\sub t_i$, 
  we have
  $I(t_i\theta\da) \red{C_e}{*} I(t_i)\theta^I\da \red{C_e}{*} t_i\theta^I\da$
  from the induction hypothesis.
  For an arbitrary $t''$ such that $t \red{R\setminus \U(C)}{} t''$,
  we have
  $I(t''\theta\da) \red{C_e}{*} I(t'')\theta^I\da \red{C_e}{*} t''\theta^I\da$
  from the induction hypothesis.
  Hence we have:
  \(I(f(\ol{t_n})\theta\da)
      \equiv
        I(f(\ol{t_n\theta\da})) 
      \equiv
        \sym{c}(f(\ol{I(t_n\theta\da)}),
                Red(\{\, I(t') \mid t\theta\da  \red{}{} t' \,\}))\da 
      \red{C_e}{*} 
        \sym{c}(f(\ol{I(t_n\theta\da)}),
                Red(\{\, I(t''\theta\da) \mid t \red{}{} t'' \,\}))\da 
      \red{C_e}{*}
        \sym{c}(f(\ol{I(t_n)\theta^I\da}),
                Red(\{\, I(t'')\theta^I\da \mid t \red{}{} t'' \,\}))\da 
      \equiv
        \sym{c}(f(\ol{I(t_n)}),\\
                Red(\{\, I(t'') \mid t \red{}{} t'' \,\}))\theta^I\da 
      \equiv I(f(\ol{t_n}))\theta^I\da\),
  and
  \(I(f(\ol{t_n}))\theta^I\da
      \equiv
        \sym{c}(f(\ol{I(t_n)}),\\
                Red(\{\, I(t'') \mid t \red{}{} t'' \,\}))\theta^I\da 
      \equiv
        \sym{c}(f(\ol{I(t_n)\theta^I\da}),
                Red(\{\, I(t'')\theta^I\da \mid t \red{}{} t'' \,\}))\da\\
      \red{C_e}{} 
        f(\ol{I(t_n)\theta^I\da})
      \red{C_e}{*} f(\ol{t_n\theta^I\da})
      \equiv f(\ol{t_n})\theta^I\da\).
\qed
\end{itemize}
}
\end{proof}

For the proof of Theorem \ref{th:ur},
it is enough to show that $I(t\theta\da) \red{C_e}{*} t\theta^I\da$.
In fact,
the corresponding lemma for STRSs was the claim \cite{SKSSN07}.
However,
the proof of the previous lemma required the stronger claim
$I(t\theta\da) \red{C_e}{*} I(t)\theta^I\da \red{C_e}{*} t\theta^I\da$
for applying the induction hypothesis.


\begin{lemma}\label{lm:permutation}
Let $t$ be a term and $\theta$ be a permutation
such that $t\theta\da$ is terminating.
Then, $I(t\theta\da) \equiv I(t)\theta^I\da$.
\end{lemma}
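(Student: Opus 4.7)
The plan is to adapt the proof of Lemma \ref{lm:ur1} to the restricted setting where $\theta$ is a variable permutation, mimicking the same well-founded induction on the pair $(\{\type(x)\mid x\in dom(\theta)\}, t)$ ordered lexicographically by $\rhd_s^{mul}$ and ${\rhd_\sub}\cup{\red{R}{}}$. The goal in each case is now to strengthen $\red{C_e}{*}$-reducibility to strict identity (up to $\alpha$-equivalence). The crucial observation driving this strengthening is that, since $\theta$ is a variable permutation, $\theta^I$ behaves as a variable-renaming on terms: whenever $X\in dom(\theta)$ with $X\theta\equiv Y\da$, $Y$ is a variable and hence $Y\notin\Delta$, so $I(Y\da)$ recurses through the $a\notin\Delta$ branch only and never introduces any $\sym{c}_\alpha/\bot_\alpha/Red_\alpha$ machinery. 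Consequently $\theta^I$ preserves well-typedness and does not create fresh $\Delta$-redexes.

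I would first dispatch the easy cases: $t\equiv\lambda x.t'$ and $t\equiv a(\ol{t_n})$ with $a\notin\Delta\cup dom(\theta)$ follow immediately by pushing the induction hypothesis under the binder or the head symbol, replacing each $\red{C_e}{*}$-step in the proof of Lemma \ref{lm:ur1} by an identity. The case $t\equiv X\in dom(\theta)$ is immediate from the definition of $\theta^I$. For $t\equiv X(\ol{t_n})$ with $X\in dom(\theta)$, the type-matching constraint on permutations forces $X\theta\equiv\lambda\ol{y_n}.Y(\ol{y_n\da})$ for a variable $Y$, so that $X(\ol{t_n})\theta\da\equiv Y(\ol{t_n\theta\da})$; one then computes $I$ on both sides, applies the induction hypothesis at the strictly smaller type $\type(y_i)\lhd_s\type(X)$ to commute $I$ with the inner substitution $\{y_i:=t_i\theta\da\}$, and uses the induction hypothesis on each $t_i$ (smaller by $\rhd_\sub$) to match the two resulting terms syntactically.

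The main obstacle lies in the case $t\equiv f(\ol{t_n})$ with $f\in\Delta$, where one must establish
\[Red_\alpha(\{I(t')\mid t\theta\da\red{R\setminus\U(C)}{}t'\})
\equiv Red_\alpha(\{I(t'')\mid t\red{R\setminus\U(C)}{}t''\})\theta^I\da.\]
Because $\theta$ is a variable permutation it creates no new redexes at $\Delta$-positions, so the map $t''\mapsto t''\theta\da$ is a bijection between the one-step reducts of $t$ and those of $t\theta\da$; by the induction hypothesis applied to each reduct (which is smaller under $\red{R}{}$), the sets coincide elementwise as $I(t''\theta\da)\equiv I(t'')\theta^I\da$. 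The delicate point is that $Red_\alpha$ is defined through $\least$ on a well-ordering of terms, so the commutation with $\theta^I\da$ requires the well-ordering to be stable under the variable renaming induced by $\theta^I$. I would handle this by observing that the well-ordering supplied by the Well-ordering theorem can be chosen invariant under $\alpha$-equivalence and under any finite variable permutation (or, equivalently, by performing a fresh application of the Well-ordering theorem on the $\theta^I$-closure of the relevant term set), which makes the ordering of the two reduct sets match and allows the $Red_\alpha$ constructor to be pushed through $\theta^I\da$ verbatim.
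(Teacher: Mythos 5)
Your proposal follows essentially the same route as the paper's proof: an induction over ${\rhd_\sub}\cup{\red{R}{}}$ with a case analysis on the shape of $t$, exploiting that a variable permutation reduces the head-variable case to a renaming and puts the one-step reducts of $t$ and of $t\theta\da$ in bijection in the $\Delta$-case. Two remarks. First, you import more machinery than needed, and one step is unjustified as stated: in the case $t\equiv X(\ol{t_n})$ with $X\in dom(\theta)$ you propose to commute $I$ with the inner substitution $\{y_i:=t_i\theta\da\}$ by invoking the induction hypothesis at a smaller type, but that substitution is not a variable permutation, so the induction hypothesis of \emph{this} lemma does not cover it. The detour is unnecessary: since $X\theta\da\equiv X'\da$ for a variable $X'$, no $\beta$-redex involving the $t_i$ is created and $X(\ol{t_n})\theta\da\equiv X'(\ol{t_n\theta\da})$; as $X'\notin\Delta$, $I$ simply distributes over the arguments and the subterm induction hypothesis on each $t_i$ closes the case --- which is exactly why the paper can use the plain ${\rhd_\sub}\cup{\red{R}{}}$ induction without the type-multiset component you carry over from Lemma \ref{lm:ur1}. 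Second, you are right to isolate the commutation of $Red_\alpha$ (hence of $\least$) with $\theta^I\da$ as the delicate point of the $\Delta$-case: the paper's proof silently identifies $Red(\{\,I(t'')\theta^I\da\mid t\red{}{}t''\,\})$ with $Red(\{\,I(t'')\mid t\red{}{}t''\,\})\theta^I\da$, and your observation that the well-ordering must be taken stable under the injective renaming induced by $\theta^I$ is a legitimate refinement that the paper omits.
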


\begin{proof}
\REVISED{
We prove the claim by induction on $t$
ordered by ${\rhd_\sub} \cup {\red{R}{}}$.
\begin{itemize}
\item
  In case of $t \equiv \lambda x.t'$:
  Since $t \rhd_\sub t'$,
  we have $I(t'\theta\da) \equiv I(t')\theta^I\da$
  from the induction hypothesis.
  Hence we have:
  \(I((\lambda x.t')\theta\da)
      \equiv I(\lambda x.t'\theta\da) 
      \equiv \lambda x.I(t'\theta\da) 
      \equiv \lambda x.I(t')\theta^I\da 
      \equiv I(\lambda x.t')\theta^I\da\).
\item
  In case of $t \equiv a(\ol{t_n})$ and $a \notin \Delta \cup dom(\theta)$:
  For each $i$,
  since $t \rhd_\sub t_i$, 
  we have $I(t_i\theta\da) \equiv I(t_i)\theta^I\da$
  from the induction hypothesis.
  Hence we have:
  \(I(a(\ol{t_n})\theta\da)
      \equiv I(a(\ol{t_n\theta\da})) 
      \equiv a(\ol{I(t_n\theta\da)}) 
      \equiv a(\ol{I(t_n)\theta^I\da}) 
      \equiv I(a(\ol{t_n}))\theta^I\da\).
\item
  In case of $t \equiv X(\ol{t_n})$ and $X \in dom(\theta)$:
  Since $\theta$ is a permutation,
  we let $X\theta\da \equiv X'\da$ for a variable $X'$.
  For each $i$,
  since $t \rhd_\sub t_i$, 
  we have $I(t_i\theta\da) \equiv I(t_i)\theta^I\da$
  from the induction hypothesis.
  Hence we have:
  \(I(X(\ol{t_n})\theta\da)
      \equiv I(X'(\ol{t_n\theta\da})) 
      \equiv X'(\ol{I(t_n\theta\da)}) 
      \equiv X'(\ol{I(t_n)\theta^I\da}) 
      \equiv X(\ol{I(t_n)})\theta^I\da 
      \equiv I(X(\ol{t_n}))\theta^I\da\).
\item
  In case of $t \equiv f(\ol{t_n})$ and $f \in \Delta$:
  For each $i$,
  since $t \rhd_\sub t_i$, 
  we have
  $I(t_i\theta\da) \equiv I(t_i)\theta^I\da$
  from the induction hypothesis.
  For an arbitrary $t''$ such that $t \red{R\setminus \U(C)}{} t''$,
  we have
  $I(t''\theta\da) \equiv I(t'')\theta^I\da$
  from the induction hypothesis.
  Since $\theta$ is a permutation,
  we have
  \(\{\, I(t') \mid t\theta\da  \red{}{} t' \,\}
    = \{\, I(t''\theta\da) \mid t \red{}{} t'' \,\}\).
  Hence we have:
  \(I(f(\ol{t_n})\theta\da)
      \equiv
        I(f(\ol{t_n\theta\da})) 
      \equiv
        \sym{c}(f(\ol{I(t_n\theta\da)}),
                Red(\{\, I(t') \mid t\theta\da  \red{}{} t' \,\})) 
      \equiv 
        \sym{c}(f(\ol{I(t_n\theta\da)}),\\
                Red(\{\, I(t''\theta\da) \mid t \red{}{} t'' \,\})) 
      \equiv 
        \sym{c}(f(\ol{I(t_n)\theta^I\da}),
                Red(\{\, I(t'')\theta^I\da \mid t \red{}{} t'' \,\})) 
      \equiv
        \sym{c}(f(\ol{I(t_n)}),
                Red(\{\, I(t'') \mid t \red{}{} t'' \,\}))\theta^I\da
      \equiv I(f(\ol{t_n}))\theta^I\da\).
\qed
\end{itemize}
}
\end{proof}


\begin{lemma}\label{lm:ur2}
Let $l \to r \in C \cup \U(C)$ and $\theta$ be a substitution such that $r\theta\da$ is terminating.
Then, $I(r\theta\da) \equiv r\theta^I\da$.
\end{lemma}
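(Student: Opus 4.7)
The plan is to prove $I(r\theta\da) \equiv r\theta^I\da$ by induction on $r$ (or, as in Lemma \ref{lm:ur1}, on $(\{\type(x) \mid x \in dom(\theta)\}, r)$ ordered lexicographically). The key preliminary observation is that for $l \to r \in C \cup \U(C)$, the term $r$ contains no symbol from $\Delta$: if $l \to r \in C$, then $r \equiv v^\sharp$ and every defined symbol $f$ occurring in $v^\sharp$ triggers the inclusion of $f$'s rules in $\U(v^\sharp) \subseteq \U(C)$, so $f \notin \Delta$, while the top of $v^\sharp$ is marked and hence not in $\D_R \supseteq \Delta$; if $l \to r \in \U(v^\sharp) \subseteq \U(C)$, then any defined symbol $g$ in $r$ satisfies $top(l) >_{\sym{def}} g$, and since $top(l)$ is reachable by $>_{\sym{def}}^*$ from some $f \in v^\sharp$, so is $g$, hence $g$'s rules are in $\U(C)$ and $g \notin \Delta$. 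Consequently the $\sym{c}(\cdot, Red(\cdot))$-wrapping branch of $I$ is never triggered by a symbol originating in $r$ itself.

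The case analysis on $r$ then proceeds smoothly in the non-variable-application cases. For $r \equiv \lambda x. r'$, both $I$ and $\theta$ commute with the abstraction and the claim follows from IH on $r'$. For $r \equiv a(\ol{r_n})$ with $a \in \Sigma$, the observation gives $a \notin \Delta$, so $I(a(\ol{r_n})\theta\da) \equiv a(\ol{I(r_n\theta\da)})$, which by IH on each $r_i$ equals $a(\ol{r_n\theta^I\da}) \equiv r\theta^I\da$. For $r \equiv X$ with $X \in dom(\theta)$, the claim is immediate from the definition of $\theta^I$.

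The main obstacle is the case $r \equiv X(\ol{r_n})$ with $X \in dom(\theta)$ and $n > 0$, where $\beta$-redexes appear after substitution. Writing $\theta(X) \equiv \lambda \ol{y_n}. s$, we have $r\theta\da \equiv s\{y_i := r_i\theta\da\}\da$ on one side and $r\theta^I\da \equiv I(s)\{y_i := r_i\theta^I\da\}\da$ on the other. The strategy is to exploit the pattern structure forced by the nontrivial case of $\U(C)$: if any $v^\sharp \in C$ fails the pattern condition, then $\U(C) = R$, $\Delta = \emptyset$, $I$ collapses to the identity, and the claim is vacuous; otherwise every $v^\sharp$ is a higher-order pattern, and at every position of $r$ where a free variable is applied, the arguments $\ol{r_n}$ must be distinct bound variables. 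In this pattern case $\sigma \equiv \{y_i := r_i\theta\da\} \equiv \{y_i := r_i\}$ is a variable permutation and Lemma \ref{lm:permutation} applied to $s$ yields $I(s\sigma\da) \equiv I(s)\sigma^I\da \equiv I(s)\sigma\da$, which after combining with the IH on each $r_i$ coincides with $r\theta^I\da$. The most delicate point is verifying that the pattern condition is inherited at this position for $l \to r \in \U(C)$, whose right-hand sides are not a priori constrained to be patterns; this is where the interplay between $v^\sharp$ and the rules reachable by $>_{\sym{def}}^*$ must be carefully exploited.
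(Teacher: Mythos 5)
Your overall strategy coincides with the paper's: the paper proves the strengthened claim $I(t\theta\da)\equiv t\theta^I\da$ for every $t\in Sub(r)$ by structural induction on $t$, relies on the observation that no symbol of $\Delta$ occurs in $r$ (your justification of this via reflexivity and transitivity of $>_{\sym{def}}^*$ is correct, and in fact more detailed than the paper's one-line remark), and in the case $t\equiv X(\ol{t_n})$ with $X\in dom(\theta)$ splits on whether the $\ol{t_n}$ are mutually distinct bound variables, invoking Lemma \ref{lm:permutation} in the affirmative branch exactly as you do. Your handling of the abstraction, non-$\Delta$ function symbol, and bare-variable cases matches the paper.

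The gap is the sub-case you yourself flag as unresolved, and the route you sketch for closing it would not work. You propose to show that, when every $v^\sharp$ in $C$ satisfies the pattern condition, the condition is ``inherited'' by the right-hand sides of all rules in $\U(C)$. But $>_{\sym{def}}$ only records which defined symbols occur in a right-hand side; it places no constraint on the shape of free-variable applications there, so a usable rule may perfectly well have a right-hand side containing a non-pattern application such as $F(F(x))$, and no interplay with $>_{\sym{def}}^*$ will rule this out. The paper does not attempt any inheritance argument: in the ``otherwise'' branch it simply asserts that $\Delta=\emptyset$, so that $I$ is the identity on terminating terms and $\theta^I=\theta$, which makes the claim immediate. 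This is exactly the definition-of-$\U$ argument you use for the half where the offending application lies in some $v^\sharp$ (then $\U(v^\sharp)=R$, hence $\U(C)=R$ and $\Delta=\emptyset$), and it is what the paper appeals to uniformly; you should close the case that way rather than by propagating the pattern condition. So: right skeleton, right auxiliary lemma, correct treatment of every case except the one you single out as delicate, where the resolution you propose diverges from the paper's and would fail.
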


\begin{proof}
\REVISED{
We show the stronger property 
$I(t\theta\da) \equiv t\theta^I\da$
for any $l \to r \in C \cup \U(C)$ and $t \in Sub(r)$.
We prove the claim by induction on $t$.
Note that we have no case that
$t \equiv f(\ol{t_n})$ and $f \in \Delta$.
\begin{itemize}
\item
  In case of $t \equiv \lambda x.t'$:
  Since $t \rhd_\sub t'$,
  we have $I(t'\theta\da) \equiv t'\theta^I\da$
  from the induction hypothesis.
  Hence we have:
  \(I((\lambda x.t')\theta\da)
      \equiv I(\lambda x.t'\theta\da) 
      \equiv \lambda x.I(t'\theta\da) 
      \equiv \lambda x.t'\theta^I\da 
      \equiv (\lambda x.t')\theta^I\da\).
\item
  In case of $t \equiv a(\ol{t_n})$ and $a \notin \Delta \cup dom(\theta)$:
  For each $i$,
  since $t \rhd_\sub t_i$, 
  we have $I(t_i\theta\da) \equiv t_i\theta^I\da$
  from the induction hypothesis.
  Hence we have:
  \(I(a(\ol{t_n})\theta\da)
      \equiv I(a(\ol{t_n\theta\da})) 
      \equiv a(\ol{I(t_n\theta\da)}) 
      \equiv a(\ol{t_n\theta^I\da}) 
      \equiv a(\ol{t_n})\theta^I\da\).
\item
  In case of $t \equiv X(\ol{t_n})$ and $X \in dom(\theta)$:
  Since $\type(X) = \type(X\theta)$,
  we have $X\theta \equiv \lambda \ol{y_n}.a(\ol{u_k})$.
  For each $i$,
  since $t \rhd_\sub t_i$, 
  we have $I(t_i\theta\da) \equiv t_i\theta^I\da$
  from the induction hypothesis.
  If $t_1,\ldots,t_n$ are mutually distinct bound variables,
  then $\{ y_i := t_i\theta\da \mid i\in \ol{n} \}$ is a permutation,
  and hence it follows from Lemma \ref{lm:permutation} that
  \(I(X(\ol{t_n})\theta\da)
      \equiv I((\lambda \ol{y_n}. a(\ol{u_k})) (\ol{t_n\theta})\da) 
      \equiv I(a(\ol{u_k})\{ y_i := t_i\theta\da \mid i\in \ol{n} \}\da) 
      \equiv I(a(\ol{u_k}))\{ y_i := I(t_i\theta\da) \mid i\in \ol{n} \}\da 
      \equiv I(a(\ol{u_k}))\{ y_i := t_i\theta^I\da \mid i\in \ol{n} \}\da 
      \equiv (\lambda \ol{y_n}. I(a(\ol{u_k}))) (\ol{t_n\theta^I}) \da 
      \equiv I(\lambda \ol{y_n}. a(\ol{u_k})) (\ol{t_n\theta^I}) \da
      \equiv X(\ol{t_n})\theta^I\da\).
  Otherwise,
  $I(X(\ol{t_n})\theta\da) \equiv X(\ol{t_n})\theta\da$
  and $\theta = \theta^I$,
  because of $\Delta = \emptyset$.
\qed
\end{itemize}
}
\end{proof}


\begin{lemma}\label{lm:ur3}
If $s \red{R}{} t$ and $s$ is terminating,
then $I(s) \red{\U(C) \cup C_e}{+} I(t)$.
\end{lemma}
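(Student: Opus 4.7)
The plan is to proceed by a case analysis on the position of the rewrite step $s \red{R}{} t$, using structural induction on $s$ for non-root reductions. Two observations make the bookkeeping clean: Lemmas \ref{lm:ur1} and \ref{lm:ur2} relate $I(-\theta\da)$ with $(-)\theta^I\da$ on the left- and right-hand sides of a rule, and the set $\U$ depends only on top symbols, so $\Delta$ is disjoint from $\{top(l) \mid l \to r \in \U(C)\}$.

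For a root rewrite $s \equiv l\theta\da \red{R}{} r\theta\da \equiv t$, distinguish two cases. If $l \to r \in \U(C)$, then $top(l) \notin \Delta$ and $I(s)$ carries no outer $\sym{c}$-wrapping. Lemma \ref{lm:ur1} gives $I(l\theta\da) \red{C_e}{*} l\theta^I\da$, one application of the rule contributes a single $\red{\U(C)}{}$ step to $r\theta^I\da$, and Lemma \ref{lm:ur2} identifies $r\theta^I\da$ with $I(r\theta\da) \equiv I(t)$. If $l \to r \in R \setminus \U(C)$, then $top(s) = top(l) \in \Delta$, so $I(s) \equiv \sym{c}(\ldots, Red_s)$ with $Red_s$ enumerating $\{I(s') \mid s \red{R\setminus\U(C)}{} s'\}$; since $t$ itself is one such $s'$, $I(t)$ lies in this set and a chain of $C_e$ rewrites peels the $\sym{c}$-chain down to $I(t)$.

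For a non-root rewrite, do structural induction on $s$. If $s \equiv \lambda x.s_0$ or $s \equiv a(\ol{s_n})$ with $a \notin \Delta$, the inductive hypothesis applied to the subterm in which the step takes place lifts directly through the outer $I$-structure, which in these cases commutes with the constructor. If $s \equiv a(\ol{s_n})$ with $a \in \Delta$ and the rewrite happens in some $s_i$, split by the rule used: when the rule is in $R \setminus \U(C)$ we have $s \red{R\setminus\U(C)}{} t$, so $I(t)$ appears in $Red_s$ and the same $C_e$-navigation finishes the argument.

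The remaining sub-case---$s \equiv a(\ol{s_n})$ with $a \in \Delta$ and the inner rewrite in $s_i$ uses a rule of $\U(C)$---is the main obstacle. Here $I(s) \equiv \sym{c}(a(\ol{I(s_n)}), Red_s)$ and $I(t) \equiv \sym{c}(a(\ol{I(t_n)}), Red_t)$; the inductive hypothesis on $s_i \red{\U(C)}{} t_i$ rewrites the first component of $I(s)$ into $a(\ol{I(t_n)})$, but we must additionally show $Red_s \red{\U(C) \cup C_e}{*} Red_t$. This step will require a secondary argument relating $R\setminus\U(C)$-reducts of $s$ and $t$---essentially a commutation property between $\U(C)$ and $R\setminus\U(C)$ steps at (possibly overlapping) positions---combined with an induction on the $Red$-structure that exploits the finite-branching assumption on $R$ so that the enumerations built by $Red$ are finite and can be simulated element by element.
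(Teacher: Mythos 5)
Your case analysis agrees with the paper's proof for the root step with a usable rule (Lemma \ref{lm:ur1}, one $\U(C)$-step, Lemma \ref{lm:ur2}), for the congruence cases with head symbol outside $\Delta$, and for the sub-cases where the contracted rule lies in $R\setminus\U(C)$. The genuine gap is the sub-case you flag as ``the main obstacle'': there the proposal stops at a plan, and the plan points in the wrong direction. You try to preserve the outer $\sym{c}$-structure of $I(s)$ and rewrite $Red_s$ into $Red_t$ via a commutation property between $\U(C)$- and $(R\setminus\U(C))$-steps. This cannot be made to work: $C_e$ can only project out or delete entries of a $\sym{c}$-list, never insert new ones, and $t$ may have $(R\setminus\U(C))$-reducts whose interpretations occur nowhere in $Red_s$; a commutation lemma for possibly overlapping higher-order steps is also far more machinery than the statement needs. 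As written, the proof is therefore incomplete, and the missing piece would not close along the proposed lines.

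The paper's treatment of the entire case $top(s)=f\in\Delta$ is a single projection and needs no relation between $Red_s$ and $Red_t$ at all: $I(s)\equiv\sym{c}(f(\ol{I(s_n)}),Red_s)\red{C_e}{}Red_s\red{C_e}{+}I(t)$, because $I(t)$ is an \emph{element of the list} $Red_s$ and $C_e$ can peel any list down to any of its elements. That is precisely why $Red$ is attached to the interpretation of $\Delta$-headed terms: one-step reducts become reachable by projections alone, so it is irrelevant which rule fired or at which position. The one legitimate point your analysis surfaces is that this requires the set underlying $Red_s$ to contain $I(t)$ for \emph{every} one-step $R$-reduct $t$ of $s$, not only for the $(R\setminus\U(C))$-reducts that the displayed definition of $I$ literally mentions; the paper's own remark that ``the set $\{I(t')\mid t\red{R}{}t'\}$ is finite'' indicates this broader reading is the intended one, and under it the projection argument disposes of your remaining case immediately. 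Replacing the commutation plan by this projection completes the proof.
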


\begin{proof}
\REVISED{
From $s \red{R}{} t$,
there exists a rule $l \to r \in R$, a context $E[\,]$,
and a substitution $\theta$
such that $s \equiv E[l\theta\da]$ and $t \equiv E[r\theta\da]$.
We prove the claim by induction on $E[\,]$.
\begin{itemize}
\item
  In case of $E[\,] \equiv \Box$ and $l \to r \in \U(C)$:
  From Lemma \ref{lm:ur1} and \ref{lm:ur2},
  we have:
  \(I(s)
    \equiv I(l\theta\da)
    \red{C_e}{*} l\theta^I\da
    \red{\U(C)}{} r\theta^I\da
    \equiv I(r\theta\da)
    \equiv I(t) \).
\item
  In case of $E[\,] \equiv \lambda x.E'[\,]$:
  \(I(\lambda x. E'[l\theta\da])
      \equiv \lambda x. I(E'[l\theta\da]) 
      \red{\U(C) \cup C_e}{+}\\ \lambda x.I(E'[r\theta\da]) 
      \equiv I(\lambda x.E'[r\theta\da])\).
\item
  In case of $E[\,] \equiv a(\ldots,E'[\,],\ldots)$ and $a \notin \Delta$:
  \(I(E[l\theta\da])
      \equiv f(..,I(E'[l\theta\da]),..)\\
      \red{\U(C) \cup C_e}{+} f(\ldots,I(E'[r\theta\da]),\ldots) 
      \equiv I(E[r\theta\da])\).
\item
  In case of $s \equiv f(\ol{s_n})$ and $f \in \Delta$:
  \(I(s)
      \equiv
        I(f(\ol{s_n}))
      \equiv
        \sym{c}(f(\ol{I(s_n)}), Red(\{\, I(v) \mid s \red{}{} v \,\})) 
      \red{C_e}{}
        Red(\{\, I(v) \mid s \red{}{} v \,\}) 
      \red{C_e}{+}
        I(t)\).
\qed
\end{itemize}
}
\end{proof}


Finally, we give the proof of the main theorem for usable rules:
\smallskip

\noindent{\bf Proof of Theorem \ref{th:ur}}.
Assume that static dependency pairs in $C$ generate an infinite chain
\(u_0^\sharp \to v_0^\sharp, u_1^\sharp \to v_1^\sharp, \ldots\),
in which every $u^\sharp \to v^\sharp \in C$ occurs infinitely many times.
Then there exist $\theta_0, \theta_1, \theta_2, \ldots$
such that
for each $i$,
$v_i^\sharp\theta_i\da \red{R}{*} u_{i+1}^\sharp\theta_{i+1}\da$.
Let $i$ be an arbitrary number.
From Lemma \ref{lm:ur1}, \ref{lm:ur2} and \ref{lm:ur3},
we have:
\(v_i^\sharp\theta_i^I\da
    \equiv I(v_i^\sharp\theta_i\da)
    \red{\U(C) \cup C_e}{*} I(u_{i+1}^\sharp\theta_{i+1}\da)
    \red{C_e}{*} u_{i+1}^\sharp\theta_{i+1}^I\da\).
Hence we have
\(v_i^\sharp\theta_i^I\da
    \gsim u_{i+1}^\sharp\theta_{i+1}^I\da
    \gsim v_{i+1}^\sharp\theta_{i+1}^I\da \)
from $\U(C) \cup C_e \subseteq {\gsim}$.
Moreover,
from $C \subseteq {\gsim} \cup {>}$ and $C \cap {>} \neq \emptyset$,
we have
$u_j^\sharp\theta_j^I\da > v_j^\sharp\theta_j^I\da$
for infinitely many $j$.
This contradicts the well-foundedness of $>$.
\qed

\section{Conclusion}

\comment{
In this paper,
we presented the argument filtering method
and the notion of usable rules on HRSs,
which are extensions of the method \cite{KS09}
and the notion \cite{SKSSN07} on STRSs, respectively.
We also gave proofs
of the correctness of the argument filtering method and usable rules.
These correctness proofs are not direct extensions
of the corresponding proofs from previous works.
Especially, the most important innovation is the proof of Lemma \ref{lm:ur1},
which is one of the key lemmas for correctness proof of usable rules.
We brought in very subtle induction,
and we changed a stronger claim
from the corresponding lemma for STRSs in \Cite{SKSSN07}
(see the discussion below the proof of Lemma \ref{lm:ur1}).
On the other hand,
we have dissatisfied another key lemma (Lemma \ref{lm:ur2}),
because to pass the proof, reducing constraints is sacrificed.
Part of our future work will be to improve the proof,
and to obtain more effectiveness of usable rules.

The argument filtering method on HRSs presented in this paper
is more versatile than the one on STRSs.
We improved the method to eliminate the assumption
called left-firmness
(rule left-hand side variables occurs at leaf positions only)
required in \Cite{K01,KS09}.

The notion of usable rules on HRSs presented in this paper
has an advantage over the one on STRSs.
We removed a dependency through higher-order variables in left-hand sides
from the definition of usable rules, required in \Cite{SKSSN07,KS09}.
On the other hand, usable rules on HRSs
also analyze a dependency through higher-order variables in right-hand sides.
For this dependency,
a stronger interpretation is demanded than the existing notion on STRSs,
because HRSs represent anonymous functions, which STRSs do not.
Part of our future work will be to weaken this dependency.

It has been shown that
we can enhance the effectiveness of reducing constraints
by incorporating argument filtering into usable rules
for TRSs \cite{GTSF06} and for STRSs \cite{KS09}.
Extending this approach onto HRSs is also part of our future work.

In \Cite{KISB09},
Kusakari, Isogai, Sakai, and Blanqui
showed only the soundness of the static dependency pair method
for the class of plain function-passing systems.
In STRSs,
Kusakari and Sakai showed
that the applicable class can be expanded
to the class of safe function-passing \cite{KS09}.
Extending this approach onto HRSs is also part of our future work.
}

By using the notion of accessibility \cite{blanqui02tcs,B00}, we
extended in an important way the class of systems to which the static
dependency pair method \cite{KISB09} can be applied. We then extended
to HRSs some methods initially developed for TRSs: arguments
filterings \cite{AG00} and usable rules \cite{GTSF06,HM07}. So,
together with the subterm criterion for HRSs \cite{KISB09} and the
normal higher-order recursive path ordering \cite{JR06}, this paper
provides a strong theoretical basis for the development of an
efficient automated termination provers for HRSs, since all these
methods have been shown quite successful in the termination
competition on TRSs \cite{tc} and are indeed the basis of current
state-of-the-art termination provers for TRSs
\cite{giesl06ijcar,HM07}. We now plan to implement all these
techniques, all the more so since some competition on the termination
of higher-order rewrite systems is under consideration
\cite{tc-hrs}. Currently, HORPO is the only technique for higher-order
rewrite systems that has been implemented \cite{horpo-prolog}. One
could also build over \Cite{K06,sternagel09tphol,contejean10pepm} to
provide certificates for these techniques in the case of HRSs.

However, there are still some theoretical problems. Currently, the
static dependency pair method does not handle function definitions
involving data type constructors with functional arguments in a
satisfactory way like, for instance, the rule $Sum5$ of Van de Pol's
formulation of $\mu$CRL \cite{vandepol96phd}:
\[\Sigma(\lambda d.Pd)\circ X\to \Sigma(\REVISED{(}\lambda d.Pd\REVISED{)}\circ X)\]
The first reason is that these arguments are not safe (Definition
\ref{def-safe}). This can be fixed by considering a more complex
interpretations for base types \cite{B00}. The second reason is that
it gives rise to the static dependency pair \(\Sigma(\lambda
d.Pd)\circ X\to Pd\circ X\) the right-hand side of which contains a
variable $d$ not occurring in the left-hand side. And, currently, no
technique can prove the non-loopingness of this static recursion
component, a problem occurring also in \Cite{blanqui06wst-hodp}.

\begin{acknowledgment}
\REVISED{We would like to thank the anonymous referees for their helpful comments.}

This research was partially supported by MEXT KAKENHI \#20500008.
\end{acknowledgment}



\begin{thebibliography}{99}

\bibitem{AG00}
Arts, T. and Giesl, J.:
Termination of Term Rewriting Using Dependency Pairs,
{\em Theoretical Computer Science}, Vol.236, pp.133--178 (2000).
  
\bibitem{B00}
Blanqui, F.:
Termination and Confluence of Higher-Order Rewrite Systems,
In {\em Proc. of the 11th Int. Conf. on Rewriting Techniques and Applications},
{\em LNCS} 1833 ({\em RTA2000}), pp.47--61 (2000).

\bibitem{blanqui02tcs}
Blanqui, F., Jouannaud, J.-P. and Okada, M.:
Inductive-data-type Systems,
{\em Theoretical Computer Science}, Vol.272, pp.41--68 (2002).

\bibitem{blanqui06wst-hodp}
Blanqui, F.:
Higher-order dependency pairs,
In {\em Proc. of WST'06}, pp.22--26 (2006).

\bibitem{BJR08}
\REVISED{%
Blanqui, F., Jouannaud, J.-P. and Rubio, A.:
The Computability Path Ordering: The End of a Quest,
In {\em Proc. of the 17th EACSL Annual Conf. on Computer Science Logic},
{\em LNCS} 5213 ({\em CSL2008}), pp.1--14 (2008).
}

\bibitem{contejean10pepm}
Contejean, E., Paskevich, A., Urbain, X., Courtieu, P., Pons, O. and Forest, J.:
A3PAT, an approach for certified automated termination proofs,
In {\em Proc. of PEPM'10}, pp.63--72 (2010).

\bibitem{GTSF06}
Giesl, J., Thiemann, R., Schneider-Kamp, P. and Falke, S.:
Mechanizing and Improving Dependency Pairs,
{\em Journal of Automated Reasoning}, Vol.37(3), pp.155--203 (2006).

\bibitem{giesl06ijcar}
Giesl, J., Schneider-Kamp, P. and Thiemann, R.:
{AProVE} 1.2: Automatic termination proofs in the dependency pair framework,
In {\em Proc. IJCAR'06},
{\em LNCS} 4130, pp.281--286 (2006).

\comment{\bibitem{G72}
Girard, J.-Y.:
{\em Interpr\'{e}tation fonctionnelle et \'{e}limination
des coupures de l'arithm\'{e}tique d'ordre sup\'{e}rieur},
PhD thesis, University of Paris VII (1972).}

\bibitem{girard88book}
Girard, J.-Y., Lafont, Y. and Taylor, P.:
{\em Proofs and Types},
Cambridge University Press (1988).

\comment{\bibitem{HM04}
Hirokawa, N. and Middeldorp, A.:
Dependency Pairs Revisited,
In {\em Proc. of the 15th Int. Conf. on Rewriting Techniques and Applications},
{\em LNCS} 3091 ({\em RTA2004}), pp.249--268 (2004).}

\bibitem{HM07}
Hirokawa, N. and Middeldorp, A.:
Tyrolean Termination Tool: Techniques and Features,
In {\em Information and Computation} 205(4), pp.474--511 (2007).

\bibitem{jouannaud91lics}
Jouannaud, J.-P. and Okada, M.:
A computation model for executable higher-order algebraic
specification languages,
In {\em Proc. of LICS'91}, pp.350--361 (1991).

\bibitem{JR06}
Jouannaud, J.-P. and Rubio, A.:
Higher-Order Orderings for Normal Rewriting,
In {\em Proc. of the 17th Int. Conf. on Rewriting Techniques and Applications},
{\em LNCS} 4098 ({\em RTA2006}), pp.387--399 (2006).

\bibitem{klop80phd}
Klop, J.W.:
{\em Combinatory Reduction Systems},
PhD thesis, Utrecht Universiteit, The Netherlands (1980).
(Published as Mathematical Center Tract 129.)

\bibitem{K06}
Koprowski, A.:
Certified Higher-Order Recursive Path Ordering,
In {\em Proc. of RTA'06}, {\em LNCS} 4098, pp.227--241 (2006).
{\tt http://color.inria.fr/}.

\bibitem{K01}
Kusakari, K.:
On Proving Termination of Term Rewriting Systems with Higher-Order Variables,
{\em IPSJ Transactions on Programming},
Vol.42, No.SIG 7 (PRO 11), pp.35--45 (2001).

\bibitem{KNT06}
Kusakari, K., Nakamura, M. and Toyama, Y.:
Elimination Transformations for Associative-Commutative Rewriting Systems,
{\em Journal of Automated Reasoning}, Vol.37, No.3, pp.205--229 (2006).

\bibitem{KS07}
Kusakari, K. and Sakai, M.:
Enhancing Dependency Pair Method
using Strong Computability in Simply-Typed Term Rewriting Systems,
{\em Applicable Algebra in Engineering, Communication and Computing},
Vol.18, No.5, pp.407--431 (2007).

\bibitem{KS09}
Kusakari, K. and Sakai, M.:
Static Dependency Pair Method for Simply-Typed Term Rewriting and Related Techniques,
{\em IEICE Transactions on Information and Systems},
Vol.E92-D, No.2, pp.235--247 (2009).

\bibitem{KISB09}
Kusakari, K., Isogai, Y., Sakai, M. and Blanqui, F.:
Static Dependency Pair Method based on Strong Computability
for Higher-Order Rewrite Systems,
{\em IEICE Transactions on Information and Systems,}
Vol.E92-D, No.10, pp.2007--2015 (2009).

\bibitem{miller89elp}
Miller, D.:
A logic programming language with lambda-abstraction, function
variables, and simple unification,
In {\em Proceedings of the International Workshop on Extensions of
  Logic Programming}, {\em LNCS 475}, pp.253--281 (1991).

\bibitem{MN98}
Mayr, R. and Nipkow, N.:
Higher-Order Rewrite Systems and their Confluence,
{\em Theoretical Computer Science}, Vol.192, No.2, pp.3--29 (1998).

\bibitem{N91}
Nipkow, N.:
Higher-order Critical Pairs,
In {\em Proc. 6th Annual IEEE Symposium on Logic in Computer Science},
pp.342--349 (1991).

\bibitem{oostrom94phd}
Oostrom, V.van.:
{\em Confluence for Abstract and Higher-Order Rewriting},
PhD thesis, Vrije Universiteit Amsterdam, The Netherlands (1994).

\bibitem{tc-hrs}
Rubio, A.:
{\tt http://termination-portal.org/wiki/Higher\_Order} (2010).

\bibitem{horpo-prolog}
Rubio, A.:
A GNU-Prolog implementation of HORPO,
Available on {\tt http://www.lsi.upc.es/~albert/term.html}.

\bibitem{SWS01}
Sakai, M., Watanabe, Y. and Sakabe, T.:
An Extension of the Dependency Pair Method
for Proving Termination of Higher-Order Rewrite Systems,
{\em IEICE Transactions on Information and Systems},
Vol.E84-D, No.8, pp.1025--1032 (2001).

\bibitem{SKSSN07}
Sakurai, T., Kusakari, K., Sakai, M., Sakabe, T. and Nishida, N.:
Usable Rules and Labeling Product-Typed Terms
for Dependency Pair Method in Simply-Typed Term Rewriting Systems,
{\em IEICE Transactions on Information and Systems},
Vol.J90-D, No.4, pp.978--989 (2007). (in Japanese)

\bibitem{sternagel09tphol}
Sternagel, C. and Thiemann, R.:
Certification of Termination Proofs using {CeTA},
In {\em Proc. of TPHOL'09}, {\em LNCS} 5674, pp.452--468 (2009).

\comment{\bibitem{sternagel10rta}
Sternagel, C. and Thiemann, R.:
Certified Subterm Criterion and Certified Usable Rules,
In {\em Proc. of RTA'10} (2010).}

\comment{\bibitem{T67}
W. W. Tait.
Intensional Interpretation of Functionals of Finite Type.
{\em Journal of Symbolic Logic} 32, pp.198--212, 1967.}

\bibitem{T03}
Terese:
Term Rewriting Systems,
{\em Cambridge Tracts in Theoretical Computer Science}, Vol. 55,
Cambridge University Press (2003).

\bibitem{tc}
{\tt http://termination-portal.org/wiki/Termination\_Competition}.

\bibitem{vandepol96phd}
Pol, J.van.:
{\em Termination of higher-order rewrite systems},
PhD thesis, Utrecht Universiteit, The Netherlands (1996).

\end{thebibliography}
\end{document}